\documentclass[aps, pre, twocolumn, amsmath, superscriptaddress,showkeys,showpacs]{revtex4-2}
\usepackage[bottom]{footmisc}
\usepackage{lipsum}
\usepackage{float}
\usepackage{graphicx}
\usepackage{algorithm}
\usepackage{amssymb, amsmath, amsfonts}
\usepackage[caption=false]{subfig} 
\usepackage{pgfplots} 
\usepackage{xcolor}
\usepackage{multibib}
\usepackage{mathtools}
\usepackage{bm}
\usepackage{amsopn}
\usepackage{xr}
\usepackage{hyperref}
\usepackage[normalem]{ulem} 
\usepackage{float}

\usepackage{graphicx,epstopdf}
\usepackage{xcolor, soul}
\usepackage{mathrsfs} 
\usepackage{fontenc}
\usepackage{ragged2e}
\usepackage{verbatim}
\newtheorem{theorem}{Theorem}
\newtheorem{proof}{proof}
\newtheorem{proposition}{Proposition}

\def\be{\begin{equation}}
\def\ee{\end{equation}}
\def\bea{\begin{eqnarray}}
\def\eea{\end{eqnarray}}
\def\bfg{\begin{figure}[H]}
\def\efg{\end{figure}}
\begin{document}

    \title{\color{black}{Selection Rules for Species Coexistence in a Hierarchical May–Leonard Model}}
\author{Rakesh Samanta}
\thanks{These authors contributed equally to this work.}
\affiliation{Raja Rammohun Roy Mahavidyalaya, Radhanagar, Hooghly 712406, West Bengal, India.}

\author{Shraosi Dawn}
\thanks{These authors contributed equally to this work.}
\affiliation{Center for Computational Natural Sciences and Bioinformatics, International Institute of Information Technology Hyderabad, Gachibowli, Hyderabad-500032, Telangana, India.}

\author{Sk Jahiruddin}
\affiliation{Sister Nibedita Govt. General Degree College for Girls, Hastings House, 20 B, Judges Court Road, Alipore, Kolkata - 700027, West Bengal, India.}

\author{Sirshendu Bhattacharyya}
\affiliation{Raja Rammohun Roy Mahavidyalaya, Radhanagar, Hooghly 712406, West Bengal, India.}

\author{Chittaranjan Hens}
\affiliation{Center for Computational Natural Sciences and Bioinformatics, International Institute of Information Technology Hyderabad, Gachibowli, Hyderabad-500032, Telangana, India.}
\author{Sayantan Nag Chowdhury}
\email{jcjeetchowdhury1@gmail.com}
\affiliation{School of Science, Constructor University, P.O. Box 750561, 28725 Bremen, Germany.}
\affiliation{Department of Mathematics, Nalanda University, Rajgir, Bihar 803116, India}

\begin{abstract}
One of the central challenges in evolutionary dynamics is understanding why some species combinations persist while others disappear. Although cyclic-interaction models have provided fundamental insights into biodiversity maintenance, much less is known about how hierarchical competitive interactions shape long-term community organization. Here, we investigate a hierarchical extension of the May-Leonard model, in which species interact through a directed predation chain while undergoing reproduction and mortality. Combining mean-field analysis with Monte Carlo simulations, we show that the fully coexisting state is generically unstable, causing the dynamics to evolve toward lower-dimensional coexistence states. The simulations further reveal stochastic extinctions dominating small populations with the dynamics progressively approaching the mean-field predictions as the system size increases. Rather than permitting arbitrary species combinations, the hierarchical-interaction structure dynamically constrains coexistence by selecting only specific subsets of species for long-term persistence. We show that these admissible coexistence states have a natural graph-theoretic interpretation as independent sets in the hierarchical interaction network, thereby providing general constraints on coexistence in hierarchical communities. Together, these results establish a theoretical framework linking hierarchical interactions, dynamical selection, graph topology, and biodiversity organization, extending the classical May-Leonard model beyond cyclic competition.

\end{abstract}

\maketitle

\section{Introduction}
\label{intro}

{\color{black} \par Understanding the mechanisms that govern biodiversity and species persistence is a central challenge in theoretical ecology and nonlinear dynamics. A fundamental question in coexistence theory is not simply whether coexistence is possible but which combinations of species are dynamically selected from the many mathematically feasible coexistence states. Because ecological communities are sustained by complex interaction networks, the structural organization of these networks plays a central role in determining long-term coexistence and ecosystem stability.

\par Ecological interaction networks can exhibit fundamentally different organizational structures. Among these, cyclic interaction networks have emerged as a paradigmatic framework for investigating how local competitive interactions generate rich collective dynamics. The rock-paper-scissors (RPS) model \cite{lotka1920analytical, volterra1927fluctuations, may1975nonlinear, he2010spatial, jiang2012multi, huang2023fitness, menezes2023spatial, cheng2014mesoscopic, park2017emergence, wang2022effect, avelino2012neummann} represents the canonical example of cyclic dominance, in which each species suppresses one competitor while being suppressed by another \cite{mohd2021interplay}. Within the Lotka-Volterra \cite{wangersky1978lotka, anisiu2014lotka, bunin2017ecological, din2013dynamics, takeuchi1996global, hernandez1997lotka} and May-Leonard \cite{serrao2017stochastic, may1975nonlinear, chi1998asymmetric, barendregt2023heteroclinic} formalisms, RPS dynamics has served as a fundamental model for exploring cyclic dominance, species coexistence, extinction transitions, and pattern formation \cite{leonard1975nonlinear, sinervo1996rock, szabo2007evolutionary, wu2010, claussen2008cyclic}. These models have successfully described diverse biological systems, including microbial consortia \cite{kerr2002local} and side-blotched lizards \cite{sinervo1996rock}. Incorporating ecological processes such as reproduction, mortality, and spatial mobility has further revealed that biodiversity can be highly sensitive to small variations in demographic parameters \cite{reichenbach2007mobility, reichenbach2008instability, mobiia2010jtb, bhattacharyya2020pre, islam2022pre, juul2012pre, szolnoki2009njp, serrao2021stabilizing}. More recently, several studies have examined the resilience of these systems to environmental and demographic perturbations through various notions of dynamical stability \cite{https://doi.org/10.1002/ecy.1601, https://doi.org/10.1111/ele.13457, baert2016per, de2016reintroducing, chatterjee2023response, chen2024stability, dai2025chaos}.

While cyclic interaction networks have provided fundamental insights into biodiversity maintenance, many ecological communities are not organized through closed competitive loops. Instead, interactions often exhibit hierarchical (non-cyclic or strictly transitive) structure \cite{reuter2010ecological, miller2008hierarchical, maia2024hierarchical, park2017emergence}, in which species can be ordered by competitive or predatory dominance and interactions propagate directionally without forming cycles. Such hierarchical organization breaks the symmetry of cyclic interactions by establishing a clear trophic gradient, creating an absolute top predator that experiences no predation and a bottom species that has no prey. Although idealized, hierarchical interaction structures arise naturally in a wide range of ecological settings, including dominance hierarchies, competition-colonization trade-offs, trophic ordering, and size-structured communities. Consequently, they provide a natural framework for investigating how directional interactions influence coexistence and the maintenance of biodiversity \cite{yang2025understanding}.

Despite the extensive literature on cyclic competitive systems \cite{szolnoki2014cyclic,chowdhury2021complex,chowdhury2023eco,chowdhury2021eco,roy2023time, park2013persistent, park2018multistability}, comparatively less attention has been devoted to understanding how hierarchical interaction structures determine which subsets of species persist and which are excluded from the long-term dynamics. In particular, it remains unclear how hierarchical interactions reorganize coexistence patterns and how transitions between different coexistence states emerge as ecological parameters vary. Recent work on hierarchical competition \cite{miller2025multispecies} has demonstrated that species-rich communities can coexist even when most species pairs cannot coexist in isolation, highlighting how community-level interactions can fundamentally alter coexistence beyond pairwise predictions. Earlier, Hastings \cite{hastings1980disturbance} showed that competition-colonization trade-offs can promote biodiversity by preventing superior competitors from monopolizing available space, allowing weaker competitors to persist through enhanced colonization. These studies suggest that hierarchical interactions can generate coexistence mechanisms that differ qualitatively from those operating in cyclic systems, motivating a systematic investigation of coexistence selection in hierarchical ecological networks.

Motivated by these developments, we investigate a hierarchical multi-species May-Leonard model in which species interact through directed predation chains rather than closed cyclic loops. Using a generalized May-Leonard formalism, we incorporate species-dependent predation, reproduction, and natural mortality rates, and analyze the resulting mean-field dynamics alongside Monte Carlo simulations. We systematically characterize the equilibrium structure, identify the mechanisms governing long-term persistence, and determine how stability is redistributed among competing coexistence states as demographic parameters vary. Our analysis demonstrates that hierarchical interactions dynamically constrain the feasible coexistence space, eliminating many mathematically admissible equilibrium configurations while selecting only specific subsets of species for long-term persistence. Furthermore, interpreting the hierarchical interaction network from a graph-theoretic perspective reveals general constraints on coexistence that naturally extend to larger hierarchical systems. Together, these results provide a unified framework for understanding coexistence selection, stability transitions, and biodiversity organization in hierarchical ecological communities, thereby extending the classical May-Leonard framework beyond cyclic competition.}

\begin{figure*}
    \centering
    \includegraphics[width=0.65\linewidth]{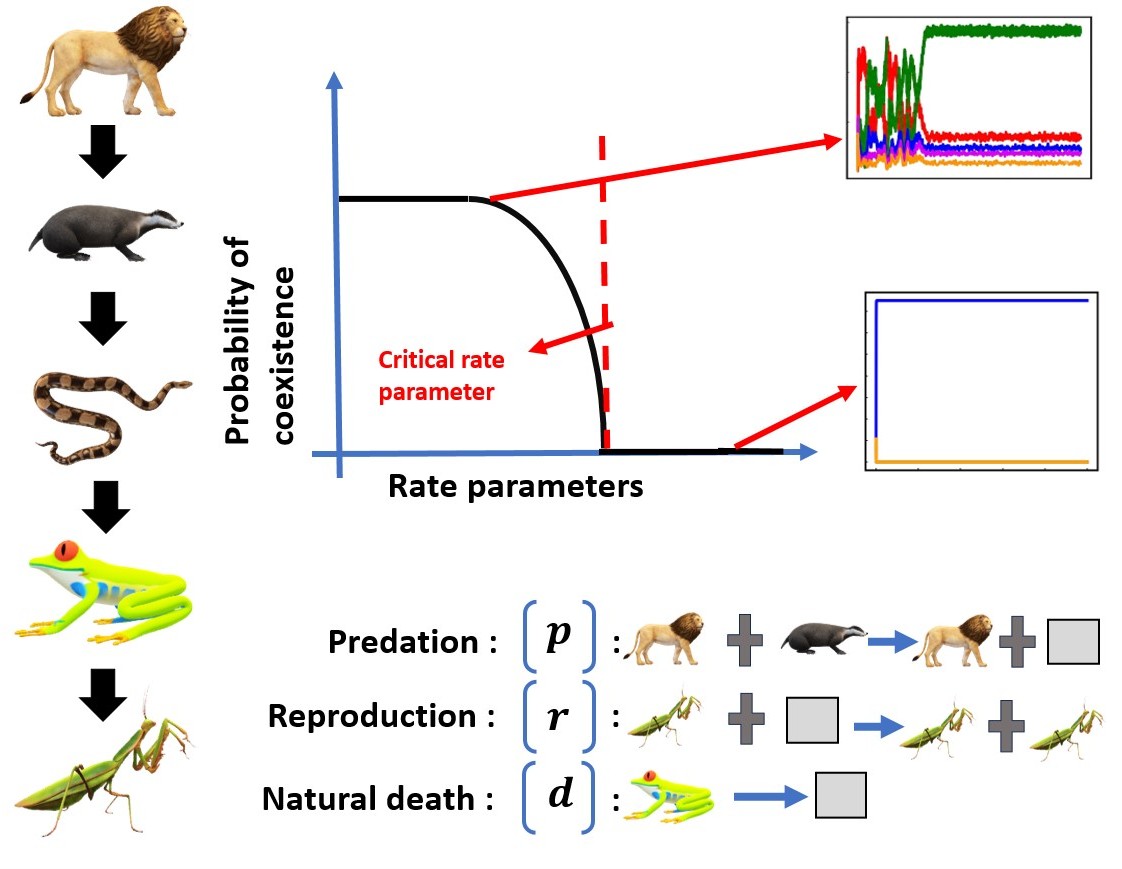}
    \caption{The schematic diagram illustrates all possible intra- and inter-species interactions among the five species considered in the model. It shows that when the existence probability is sufficiently high, all five species coexist in a stable state. However, as the rate parameter exceeds a critical threshold, the probability of coexistence  of all species decrease and eventually approach extinction. The empty square boxes denote vacant or unoccupied sites in the system, while the black arrows indicate the predator–prey interactions between species. The lion is placed at the highest trophic level, signifying that it has no natural predator within the interaction network. The animal icons are used solely as symbolic representations of the five species and do not correspond to any specific biological system or interaction.
  }
    \label{fig:placeholder}
\end{figure*}

{\color{black}

\section{Five-Species Hierarchical Model}}
\label{model}
\noindent
\subsection{Five-species model}

\noindent
We consider a hierarchical extension of the May–Leonard framework \cite{leonard1975nonlinear} consisting of five interacting species $A$, $B$, $C$, $D$ and $E$ which occupy sites in a well-mixed environment containing vacant locations $V$. {(\color{black}A simplified three-species hierarchical model exhibiting the same underlying mechanisms is analyzed in detail in Appendix~\ref{Three-species}.)} Each site can thus exist in one of six possible states: occupied by one of the five species or empty. The population densities of all states satisfy the conservation condition
$$\rho_a+\rho_b+\rho_c+\rho_d+\rho_e+\rho_v=1$$
The temporal dynamics of the system are driven by three fundamental ecological processes—predation, reproduction, and natural death $\--$ each characterized by specific rate constants.

\paragraph*{Predation hierarchy:}

Unlike the cyclic rock–paper–scissors interaction, the present model incorporates a directed hierarchical chain of predation. Species 
$A$ preys on $B$, $B$ preys on $C$, $C$ on $D$, and $D$ on $E$. The lowest species $E$ has no prey but is susceptible to predation by $D$. The interaction rules are expressed as
\bea
A + B & \xrightarrow{p_a} & A + V \nonumber \\
B + C & \xrightarrow{p_b} & B + V \nonumber \\
C + D & \xrightarrow{p_c} & C + V \nonumber \\
D + E & \xrightarrow{p_d} & D + V
\label{predation}
\eea
where $p_i$ denotes the predation rate of the $i$-th predator. Each predation event removes one prey individual and converts its site into a vacancy, representing consumption or displacement in an ecological sense. This directed topology breaks the cyclic symmetry typical of RPS systems and establishes a trophic gradient from species $A$ to species $E$.

\paragraph*{Reproduction:}

Each species can reproduce by colonizing a neighboring vacant site with a species-specific rate constant $r_x$:
\begin{equation}
    x + V \space \xrightarrow{r_x} \space 2x; \;\;\;\; x \;\epsilon \; \lbrace A, B, C, D, E \rbrace
\label{reproduction}
\end{equation}

This rule captures local reproduction or recruitment limited by resource availability, as reproduction requires the presence of an adjacent empty site.

\paragraph*{Natural mortality:}

Independently of predation, each species experiences spontaneous death with rate $d_x$:
\begin{equation}
    x \space \xrightarrow{d_x} \space V; \;\;\;\; x \;\epsilon \; \lbrace A, B, C, D, E \rbrace
\label{death}
\end{equation}

This process represents intrinsic mortality or environmental loss, transforming a living site into a vacancy without affecting other species directly.


\subsection{Mean-field representation}

{\color{black}The mean-field description assumes a well-mixed population and neglects spatial correlations, allowing the dynamics to be expressed solely in terms of species densities $\rho_i(t)$.} Under this approximation, the time evolution of each species is governed by mean-field rate equations.
The time evolution of each species results from the balance among reproduction, predation losses, and natural death, constrained by the fraction of vacant sites $\rho_v=1-\sum\limits_i \rho_i$ with $i=a,b,\cdots,e$ respectively. For the top predator $A$, which has no predator, the evolution equation reads
\bea
\dot{\rho_a} = \rho_a \left( r_a \rho_v - d_a \right)
\label{diff_a}
\eea
For each subsequent species $i=b,c,d,e$, the dynamics incorporate both intrinsic processes and predation from the immediate superior species:
\bea
\dot{\rho_i} = \rho_i \left( r_i \rho_v - p_{i-1} \rho_{i-1} - d_i \right)
\label{diff_all}
\eea

{\color{black}This formulation yields a hierarchically coupled nonlinear system that generalizes the May–Leonard equations from cyclic competition to asymmetric directional interactions.} The hierarchical coupling introduces a sequential dependency: species $i$ is directly affected only by its immediate predator $i-1$, while indirectly influenced by higher-level species through cascading effects.

\par {\color{black}The resulting mean-field system defines a hierarchical network of competitive interactions in which each species is directly influenced by its demographic parameters and by predation from its immediate superior competitor. The interplay among reproduction, mortality, and predation determines the admissible equilibrium configurations and their stability properties, which are analyzed in the following sections.}


{\color{black}
\section{Equilibrium points}

\par In contrast to conventional cyclic competition models, the equilibrium structure of the five-species hierarchy is significantly richer. Different states will be possible depending on the relative magnitudes of mortality, reproduction, and predation parameters, including extinctions, single-species survival, and coexistence involving different subsets of species. Since each species may either persist or become extinct, the full five-species system possesses up to $2^N=2^5=32$ equilibrium configurations. Rather than treating these equilibria independently, we classify them by the number and arrangement of surviving species, providing a natural framework for understanding coexistence selection in hierarchical competitive systems.

\subsection{Extinction equilibrium}

The simplest equilibrium configuration corresponds to the complete extinction of all species. In this state, the density of every species vanishes, and the system consists entirely of vacant sites. Biologically, this equilibrium represents the collapse of the ecological community due to overwhelming reproduction by mortality, i.e. 
$r_i < d_i \quad \forall i$. The corresponding equilibrium densities are given by
\begin{equation}
(\rho_a^*,\rho_b^*,\rho_c^*,\rho_d^*,\rho_e^*)=(0,0,0,0,0),
\end{equation}
with $\rho_v^*=1$.

\subsection{Single-species equilibria}

A second class of equilibria arises when only one species survives, while all others become extinct. These states correspond to complete competitive exclusion, in which a single species occupies all available habitat, except for the fraction required to balance mortality and reproduction. Since each of the five species may persist independently, the model admits five distinct single-species equilibria. The existence of such states requires a positive equilibrium density for the surviving species, which imposes the feasibility condition $r_i>d_i \quad i = a,b,c,d,e$, i.e., \, for each species $i$, the growth rate per-capita must exceed the mortality. Explicit expressions for the equilibrium densities are given by 
\begin{equation}
    \rho_i^* = 1 - \rho_v^* = 1 - \frac{d_i}{r_i}, \quad i = a,b,c,d,e.
\end{equation}

\subsection{Two-species coexistence equilibria}

Two-species coexistence states constitute the simplest nontrivial coexistence configurations supported by the hierarchical model. Depending on whether the surviving species are adjacent in the predation hierarchy, two qualitatively different equilibrium structures emerge.

\paragraph*{Consecutive species:}

When two adjacent species coexist, predation directly couples their dynamics, and the equilibrium is generally an isolated point in phase space. The admissible consecutive pairs are $(A,B)$, $(B,C)$, $(C,D)$, and $(D,E)$. For a generic consecutive pair $(i,i+1)$, the equilibrium densities are

\begin{equation}
(\rho_i^*,\rho_{i+1}^*)=
\left(
\frac{d_i r_{i+1}-d_{i+1} r_i}{p_i r_i},
-\frac{\begin{array}{@{}l@{}}
    d_i p_i + d_i r_{i+1} - d_{i+1} r_i \\
    {} - p_i r_i
  \end{array}}{p_i r_i}
\right),
\end{equation}

subject to the feasibility condition that both equilibrium densities remain positive.

\paragraph*{Nonadjacent species:}

A qualitatively different situation arises when the surviving species are separated in the hierarchy and therefore do not interact directly. In this case, coexistence is possible only when the mortality-to-reproduction ratios satisfy

\begin{equation} 
\frac{d_i}{r_i}=\frac{d_j}{r_j}.
\label{eq9}
\end{equation}

The equilibrium densities are not uniquely determined. Instead, they satisfy the conservation relation

\begin{equation} \label{eq10}
\rho_i^*+\rho_j^*=1-\frac{d_i}{r_i},
\end{equation}

which defines a continuous family of coexistence equilibria.

For the five-species hierarchy, the nonadjacent coexistence pairs are

(A, C),\quad
(A, D),\quad
(A, E),\quad
(B, D),\quad
(B, E),\quad
(C, E).

The corresponding coexistence manifolds are therefore

\begin{align}
\rho_a^*+\rho_c^* &= 1-\frac{d_a}{r_a},
& \frac{d_a}{r_a} &= \frac{d_c}{r_c}, \\
\rho_a^*+\rho_d^* &= 1-\frac{d_a}{r_a},
& \frac{d_a}{r_a} &= \frac{d_d}{r_d}, \\
\rho_a^*+\rho_e^* &= 1-\frac{d_a}{r_a},
& \frac{d_a}{r_a} &= \frac{d_e}{r_e}, \\
\rho_b^*+\rho_d^* &= 1-\frac{d_b}{r_b},
& \frac{d_b}{r_b} &= \frac{d_d}{r_d}, \\
\rho_b^*+\rho_e^* &= 1-\frac{d_b}{r_b},
& \frac{d_b}{r_b} &= \frac{d_e}{r_e}, \\
\rho_c^*+\rho_e^* &= 1-\frac{d_c}{r_c},
& \frac{d_c}{r_c} &= \frac{d_e}{r_e}.
\end{align}

Unlike consecutive pairs, these nonadjacent coexistence states form continuous equilibrium manifolds rather than isolated equilibrium points. These coexistence manifolds constitute an important feature of the hierarchical model and foreshadow the coexistence selection rules discussed later.

\subsection{Three-species coexistence equilibria}

The equilibrium structure becomes increasingly rich as three species coexist. Three-species coexistence states represent an intermediate level of biodiversity between pairwise coexistence and full community coexistence. Compared with two-species states, the equilibrium structure becomes considerably richer because the surviving species may form consecutive, mixed, or entirely nonadjacent configurations within the hierarchy. Of particular interest are coexistence states formed entirely by nonadjacent species, which occupy a distinguished role in the hierarchy. These configurations provide the first indication that hierarchical interactions constrain coexistence in a highly structured manner rather than allowing arbitrary combinations of surviving species.

\paragraph*{Consecutive species.}

When three adjacent species survive simultaneously, their densities are coupled through successive predation interactions. For a generic consecutive triplet $(i,i+1,i+2)$, the equilibrium densities are obtained from the steady-state conditions as

\begin{align}
\rho_i^* &= \frac{r_{i+1}\rho_v^* - d_{i+1}}{p_i}, \\
\rho_{i+1}^* &= \frac{r_{i+2}\rho_v^* - d_{i+2}}{p_{i+1}}, \\
\rho_{i+2}^* &= 1-\rho_v^*-\rho_i^*-\rho_{i+1}^*.
\end{align}

Thus, the coexistence densities are uniquely determined once the equilibrium vacancy density $\rho_v^*$ is specified. For the five-species hierarchy, the admissible consecutive triplets are $(A,B,C)$, $(B,C,D)$, and $(C,D,E)$. Their explicit equilibrium densities are

\begin{itemize}

\item \textbf{Species A, B \& C:}

\[
\begin{bmatrix}
\rho_a^* \\
\rho_b^* \\
\rho_c^* \\
\rho_d^* \\
\rho_e^*
\end{bmatrix}
=
\begin{bmatrix}
\dfrac{d_a r_b-d_b r_a}{p_a r_a} \\
\dfrac{d_a r_c-d_c r_a}{p_b r_a} \\
-\dfrac{
 \begin{array}{@{}l@{}}
    d_a p_a p_b + d_a p_a r_c + d_a p_b r_b \\
    {} - d_b p_b r_a - d_c p_a r_a - p_a p_b r_a
  \end{array}
}{p_a p_b r_a} \\
0 \\
0
\end{bmatrix}
\]

\item \textbf{Species B, C \& D:}

\[
\begin{bmatrix}
\rho_a^* \\
\rho_b^* \\
\rho_c^* \\
\rho_d^* \\
\rho_e^*
\end{bmatrix}
=
\begin{bmatrix}
0 \\
\dfrac{d_b r_c-d_c r_b}{p_b r_b} \\
\dfrac{d_b r_d-d_d r_b}{p_c r_b} \\
-\dfrac{
\begin{array}{@{}l@{}}
    d_b p_b p_c + d_b p_b r_d + d_b p_c r_c \\
    {} - d_c p_c r_b - d_d p_b r_b - p_b p_c r_b
  \end{array}
}{p_b p_c r_b} \\
0
\end{bmatrix}
\]

\item \textbf{Species C, D \& E:}

{\footnotesize
\[
\begin{bmatrix}
\rho_a^* \\
\rho_b^* \\
\rho_c^* \\
\rho_d^* \\
\rho_e^*
\end{bmatrix}
=
\begin{bmatrix}
0 \\
0 \\
\dfrac{d_c r_d-d_d r_c}{p_c r_c} \\
\dfrac{d_c r_e-d_e r_c}{p_d r_c} \\
-\dfrac{
d_c p_c p_d+d_c p_c r_e+d_c p_d r_d
-d_d p_d r_c-d_e p_c r_c-p_c p_d r_c
}{p_c p_d r_c}
\end{bmatrix}
\]
}



\end{itemize}

\paragraph*{Mixed configurations.}

Beyond consecutive coexistence states, the hierarchy also permits configurations containing both adjacent and nonadjacent species. In these cases, the densities of adjacent species are constrained by predation, while the nonadjacent species introduce additional coexistence degrees of freedom. Consequently, these equilibria generally form coexistence manifolds rather than isolated points. The admissible mixed configurations are

(A,B,D),\quad
(A,B,E),\quad
(A,C,D),\quad
(A,D,E),\quad
(B,C,E),\quad
(B, D,E).

The corresponding coexistence relations are

\begin{align}
\rho_a^* &= \frac{d_a r_b-d_b r_a}{p_a r_a},
&
\rho_a^*+\rho_b^*+\rho_d^*
&=
1-\frac{d_a}{r_a},
&
\frac{d_a}{r_a}
&=
\frac{d_d}{r_d},
\\
\rho_a^* &= \frac{d_a r_b-d_b r_a}{p_a r_a},
&
\rho_a^*+\rho_b^*+\rho_e^*
&=
1-\frac{d_a}{r_a},
&
\frac{d_a}{r_a}
&=
\frac{d_e}{r_e},
\\
\rho_c^* &= \frac{d_c r_d-d_d r_c}{p_c r_c},
&
\rho_a^*+\rho_c^*+\rho_d^*
&=
1-\frac{d_c}{r_c},
&
\frac{d_a}{r_a}
&=
\frac{d_c}{r_c},
\\
\rho_d^* &= \frac{d_d r_e-d_e r_d}{p_d r_d},
&
\rho_a^*+\rho_d^*+\rho_e^*
&=
1-\frac{d_d}{r_d},
&
\frac{d_a}{r_a}
&=
\frac{d_d}{r_d},
\\
\rho_b^* &= \frac{d_b r_c-d_c r_b}{p_b r_b},
&
\rho_b^*+\rho_c^*+\rho_e^*
&=
1-\frac{d_b}{r_b},
&
\frac{d_b}{r_b}
&=
\frac{d_e}{r_e},
\\
\rho_d^* &= \frac{d_d r_e-d_e r_d}{p_d r_d},
&
\rho_b^*+\rho_d^*+\rho_e^*
&=
1-\frac{d_d}{r_d},
&
\frac{d_b}{r_b}
&=
\frac{d_d}{r_d}.
\end{align}

\paragraph*{Nonadjacent species.}

A particularly important coexistence state arises when all surviving species are mutually nonadjacent in the hierarchy. For the five-species model, the only such configuration is $(A, C, E)$. Since no pair of surviving species interacts directly, coexistence is possible provided

\begin{equation} \label{eq26}
\frac{d_a}{r_a} = \frac{d_c}{r_c} = \frac{d_e}{r_e},
\end{equation}

with equilibrium densities satisfying

\begin{equation} \label{eq27}
\rho_a^*+\rho_c^*+\rho_e^*=1-\frac{d_c}{r_c}.
\end{equation}

This relation defines a two-dimensional coexistence manifold rather than an isolated equilibrium point. Among the coexistence states of all three-species, the configuration $(A, C ,E)$ is distinguished by the complete absence of direct predatory interactions between the surviving species. As will be shown later, this state plays a central role in the coexistence-selection mechanism of the hierarchical model.

\subsection{Four-species coexistence equilibria}

Four-species coexistence states correspond to configurations in which only a single species is absent from the community. These equilibria represent the highest level of partial coexistence supported by the hierarchy and form an intermediate stage between three-species coexistence and the fully populated interior state.

For a generic four-species configuration $(i,i+1,i+2,i+3)$, the equilibrium densities are given by

{\color{black}
\begin{align}
\rho_i^* &= \frac{r_{i+1}\rho_v^*-d_{i+1}}{p_i}, \\
\rho_{i+1}^* &= \frac{r_{i+2}\rho_v^*-d_{i+2}}{p_{i+1}}, \\
\rho_{i+2}^* &= \frac{r_{i+3}\rho_v^*-d_{i+3}}{p_{i+2}}, \\
\rho_{i+3}^* &= 1-\rho_v^*-\rho_i^*-\rho_{i+1}^*-\rho_{i+2}^*.
\end{align}}

The five-species hierarchy admits five possible four-species configurations,

(A, B ,C,D),\quad
(A, B,C,E),\quad 
(A,B,D,E),\quad
(A,C,D,E),\quad
(B,C,D,E). 

The feasibility of these states requires that all equilibrium densities remain positive. However, the hierarchical structure imposes strong constraints on the existence of four-species coexistence. In particular, the symmetric version of the model considered later does not support feasible four-species coexistence states. Consequently, these equilibria serve primarily as transitional configurations linking lower-dimensional coexistence states to the fully populated interior equilibrium. Note that, although four-species equilibria formally exist, they are not dynamically relevant in the parameter regimes investigated later in Sec.\ \ref{numerical}.

\subsection{Interior coexistence equilibrium}

The interior coexistence equilibrium corresponds to the state in which all five species persist simultaneously with positive densities. This equilibrium represents the maximum level of biodiversity permitted by the deterministic mean-field description and therefore serves as a natural reference point for assessing the hierarchy's long-term coexistence capacity.

The resulting equilibrium densities of the interior equilibrium for a generic five-species system are

\begin{align}
S^* &= \sum_{i=a}^{e}\rho_i^*=1-\frac{d_a}{r_a},
\\
\rho_a^* &=
\frac{r_b(d_a/r_a)-d_b}{p_a},
\\
\rho_b^* &=
\frac{r_c(d_a/r_a)-d_c}{p_b},
\\
\rho_c^* &=
\frac{r_d(d_a/r_a)-d_d}{p_c},
\\
\rho_d^* &=
\frac{r_e(d_a/r_a)-d_e}{p_d},
\\
\rho_e^* &=
\left(1-\frac{d_a}{r_a}\right)
-(\rho_a^*+\rho_b^*+\rho_c^*+\rho_d^*).
\end{align}

The interior equilibrium is feasible only when all species densities remain positive, i.\ e.\ ,

$$
\rho_i^*>0,
\qquad
i=a,b,c,d,e.$$

These positivity conditions impose constraints on the demographic and predation parameters and determine the region of parameter space in which full coexistence is admissible. The stability of the interior coexistence equilibrium is examined in the following section.

\section{Instability of the Interior Coexistence State}

The existence of an interior coexistence equilibrium does not necessarily imply long-term species persistence. For ecological coexistence to be dynamically realized, the equilibrium must remain stable against small perturbations in species densities. We therefore investigate the linear stability of the full coexistence state derived in the previous section. By evaluating the Jacobian matrix at the interior equilibrium and applying the Routh--Hurwitz criterion, we demonstrate that the interior coexistence state is generically unstable for positive demographic and predation parameters. Consequently, complete coexistence cannot serve as an asymptotically stable attractor of the hierarchical dynamics.

\subsection{Jacobian Matrix}

Let $\boldsymbol{\rho}^{*}=(\rho_a^*,\rho_b^*,\rho_c^*,\rho_d^*,\rho_e^*)$ denotes the interior equilibrium. The local stability of this state is determined by the Jacobian matrix. Evaluating the partial derivatives of the mean-field equations at the interior equilibrium yields

\[
\scriptsize
\begin{array}{c}
J^* = \\[2mm]
-\begin{pmatrix}
r_a \rho_a^* & r_a \rho_a^* & r_a \rho_a^* & r_a \rho_a^* & r_a \rho_a^* \\
\rho_b^*(r_b+p_a) & r_b \rho_b^* & r_b \rho_b^* & r_b \rho_b^* & r_b \rho_b^* \\
r_c \rho_c^* & \rho_c^*(r_c+p_b) & r_c \rho_c^* & r_c \rho_c^* & r_c \rho_c^* \\
r_d \rho_d^* & r_d \rho_d^* & \rho_d^*(r_d+p_c) & r_d \rho_d^* & r_d \rho_d^* \\
r_e \rho_e^* & r_e \rho_e^* & r_e \rho_e^* & \rho_e^*(r_e+p_d) & r_e \rho_e^*
\end{pmatrix}
\end{array}
\normalsize
\]

The eigenvalues of this matrix determine the local stability properties of the interior coexistence state.

\subsection{Characteristic Polynomial}

The characteristic equation associated with the Jacobian matrix is
$\det(\lambda I - J^*) = 0.$ For the five-species hierarchy, this yields the fifth-order characteristic polynomial

$$\lambda^5 + a_1 \lambda^4 + a_2 \lambda^3 + a_3 \lambda^2 + a_4 \lambda + a_5 = 0,$$

with coefficients

\begin{align}
a_1 &= r_a \rho_a^* + r_b \rho_b^* + r_c \rho_c^*
      + r_d \rho_d^* + r_e \rho_e^*, \\
a_2 &= - \Big(
      p_a r_a \rho_a^* \rho_b^*
      + p_b r_b \rho_b^* \rho_c^*
      + p_c r_c \rho_c^* \rho_d^* \nonumber \\
&\qquad\qquad
      + p_d r_d \rho_d^* \rho_e^*
      \Big), \\
a_3 &= p_a p_b r_a \rho_a^* \rho_b^* \rho_c^*
      + p_b p_c r_b \rho_b^* \rho_c^* \rho_d^* \nonumber \\
&\qquad\qquad
      + p_c p_d r_c \rho_c^* \rho_d^* \rho_e^*, \\
a_4 &= - \Big(
      p_a p_b p_c r_a \rho_a^* \rho_b^* \rho_c^* \rho_d^*
      \nonumber \\
&\qquad\qquad
      + p_b p_c p_d r_b \rho_b^* \rho_c^* \rho_d^* \rho_e^*
      \Big), \\
a_5 &= p_a p_b p_c p_d \,
      r_a \rho_a^* \rho_b^* \rho_c^*
      \rho_d^* \rho_e^* .
\end{align}

The signs of these coefficients provide important information regarding the stability of the equilibrium.

\subsection{Routh--Hurwitz Analysis}

Direct computation of the eigenvalues of a fifth-order polynomial is generally cumbersome. Instead, stability can be determined using the Routh--Hurwitz criterion. Define the auxiliary determinants

$\begin{aligned}
b_1 &= a_1 a_2 - a_3,\\
b_2 &= a_1 a_2 a_3 - a_1^2 a_4 - a_3^2 + a_2 a_4,\\
b_3 &= b_2 a_4 - b_1^2 a_5.
\end{aligned}$

For asymptotic stability, the Routh--Hurwitz criterion requires
$a_i > 0,$ together with $b_j > 0,$ with $i=1,2,3,4,5$ and $j=1,2,3.$

Inspection of the characteristic coefficients yields $$
a_1>0,
\qquad
a_2<0,
\qquad
a_3>0,
\qquad
a_4<0,
\qquad
a_5>0.$$

Therefore, two of the necessary Routh--Hurwitz conditions are violated for positive parameter values. Since $a_2<0$ and $a_4<0$, the Routh--Hurwitz conditions cannot be satisfied for any positive values of the demographic and predation parameters. Consequently, the characteristic polynomial necessarily possesses at least one root with a positive real part, implying that the interior coexistence equilibrium is unstable.

We therefore conclude that {\it the interior coexistence equilibrium is unstable for all} $r_i>0, p_i>0,$ and $d_i>0.$ This result has important ecological implications. Although the deterministic dynamics formally admit a state in which all five species coexist, that state cannot persist under arbitrarily small perturbations. The hierarchical interaction structure therefore prevents stable full coexistence and redirects the dynamics toward lower-dimensional coexistence states. The identification of these dynamically selected states forms the basis of the coexistence-selection rules discussed in the following section.

\section{Selection Rules for Coexistence}

The equilibrium analysis presented in the previous sections reveals a rich landscape of mathematically admissible coexistence and extinction states. However, the mere existence of an equilibrium does not imply that it is dynamically relevant or ecologically observable. In particular, the instability of the interior coexistence equilibrium demonstrates that complete coexistence cannot generally be sustained, forcing the system toward lower-dimensional coexistence states. This observation raises a more fundamental question: among the many feasible equilibrium configurations, which states are dynamically selected and what mechanisms govern their persistence?

The coexistence patterns observed in the hierarchical model should therefore be understood as the outcome of a dynamical selection process rather than as simple consequences of the interaction topology. Although the model admits numerous equilibrium configurations, stability constraints eliminate a large fraction of them from the long-term dynamics. The hierarchical interaction structure acts as a dynamical filter, progressively restricting the set of accessible biodiversity states and selecting only specific coexistence manifolds as asymptotic outcomes. From this perspective, the coexistence-selection rules identified in this study do not merely describe the topology of species interactions; rather, they characterize how hierarchical competition, indirect interactions, and demographic processes jointly constrain biodiversity and determine which subsets of species can persist over long timescales.

\subsection{Non-adjacent Coexistence States}

A distinctive feature of the hierarchical model is the privileged role played by non-adjacent species. Unlike adjacent species, which are directly coupled through predation, non-adjacent species do not interact directly and therefore avoid immediate competitive suppression. As a result, coexistence among non-adjacent species is primarily governed by demographic constraints rather than by direct predatory interactions.

This distinction is already evident from the equilibrium structure. Consecutive coexistence states generally correspond to isolated equilibrium points whose densities are uniquely determined by the balance between predation, reproduction, and mortality. In contrast, non-adjacent coexistence states frequently give rise to continuous families of equilibria characterized by conserved density sums. Examples include the two-species coexistence states $(A,C)$, $(A,D)$, $(A,E)$, $(B,D)$, $(B,E)$, and $(C,E)$, which satisfy relations given in the Eqs.\ \eqref{eq9} and \eqref{eq10}. These relations define coexistence manifolds rather than isolated equilibrium points. The surviving species may therefore redistribute their densities continuously while preserving the total occupied fraction.

The same mechanism extends to higher-order coexistence states. In particular, the three-species configuration $(A,C,E)$ occupies a special position within the hierarchy because none of the surviving species interacts directly with any other surviving species. The coexistence condition is explicitly given in Eqs.\ \eqref{eq26} and \eqref{eq27}. Consequently, the $(A,C,E)$ state forms a two-dimensional coexistence manifold and represents the highest-order coexistence configuration composed entirely of mutually non-interacting species.

These observations suggest a fundamental coexistence-selection mechanism in the hierarchical model. Species separated within the hierarchy are less susceptible to direct predatory conflicts and can therefore coexist more readily than adjacent species. The coexistence of non-adjacent species is closely related to the indirect interactions generated by the hierarchy. For example, species $A$ suppresses species $B$, while species $B$ suppresses species $C$. By reducing the abundance of $B$, species $A$ indirectly reduce the pressure exerted on $C$, thus creating an effective positive influence on $C$. This mechanism provides an intuitive explanation for the emergence of the coexistence-selection rules identified above. More generally, species that do not interact directly may become linked through such interaction chains, giving rise to indirect facilitative effects. Consequently, non-adjacent species experience weaker effective competition and are preferentially selected for long-term coexistence. The interaction hierarchy therefore acts as a filter that restricts the set of dynamically viable biodiversity states and promotes coexistence among suitably separated species.

The coexistence states identified here possess a simple graph-theoretic interpretation. Stable coexistence tends to occur among subsets of species that do not share direct competitive links. In graph-theoretic language, the surviving species form approximate independent sets of the hierarchical interaction graph. This observation suggests that coexistence selection may be understood as a structural filtering process acting on the interaction network itself. Note that, the coexistence-selection rules reported here should not be interpreted as universal ecological laws. Rather, they characterize the class of hierarchical May-Leonard systems considered in this particular study. Whether analogous selection mechanisms emerge in more general competitive networks remains an open question deserving further investigation.

\subsection{Parameter-Induced Stability Transitions}

While the interaction hierarchy determines which coexistence states are structurally favored, demographic parameters determine which of these states are dynamically realized. Small variations in reproduction or mortality rates modify the balance between growth and predation, thereby redistributing stability among competing coexistence states.

In the present model, the mortality-to-reproduction ratio acts as an effective control parameter. Variations in this ratio alter the equilibrium vacancy density and consequently modify the feasibility and stability of different coexistence configurations. As the control parameter changes, the system is expected to undergo transitions between distinct ecological outcomes, including full coexistence, partial coexistence, non-adjacent coexistence, and single-species dominance.

These considerations suggest a second coexistence-selection mechanism. Whereas the interaction hierarchy constrains the set of admissible coexistence states, demographic parameters determine which of those states become dynamically relevant. Different parameter regimes may therefore favor different subsets of species, leading to shifts in the long-term community composition.

The coexistence dynamics of the hierarchical May--Leonard model are thus governed by two complementary principles. The first is a structural selection rule arising from the interaction hierarchy itself, which favors coexistence among non-adjacent species and suppresses strongly interacting subsets. The second is a dynamical selection rule arising from parameter variation, which controls the redistribution of stability among the admissible coexistence states.

These analytical predictions provide a framework for interpreting the numerical results presented in the following section, where the parameter-induced transitions between coexistence states are investigated explicitly.

\begin{figure*}[!t]
\centering
\includegraphics[width=0.65\textwidth]{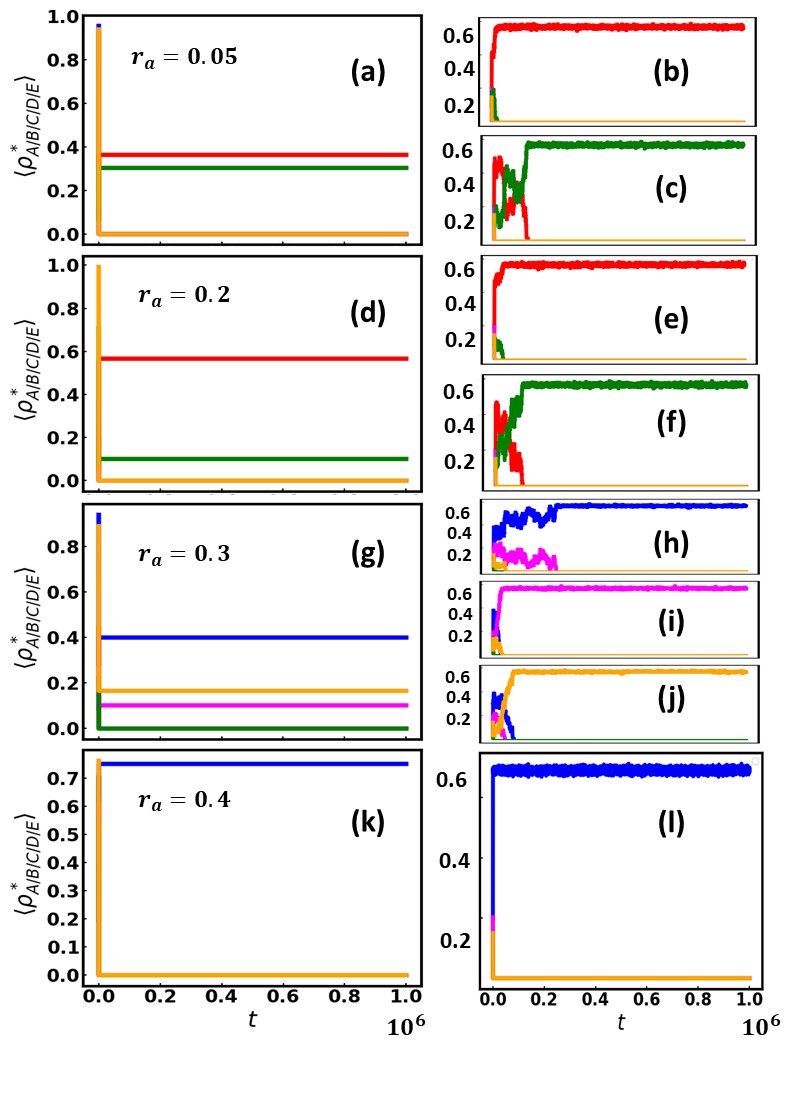}
\caption{%
Time evolution of species densities for different values of the reproduction rate parameter $r_a$. The left column shows the numerical solutions of the ODE model, while the right column presents results obtained from Monte Carlo (MC) simulations. Panels~(a)--(f) correspond to $r_a=0.05$ and $r_a=0.2$, where species B (red) and D (green) survive in the long-time limit. Panels~(g)--(j), corresponding to $r_a=0.3$, exhibit asymptotic coexistence of species A (blue), C (magenta), and E (orange). Panels~(k)--(l) show that species A (blue) is the sole survivor. For $r_a=0.05$, $0.2$, and $0.3$, the ODE model predicts the long-time persistence of multiple species, whereas each individual MC realization ultimately converges to a single surviving species due to stochastic fluctuations. The remaining parameters are fixed at $r_b=r_c=r_d=r_e=0.3$, $d_a=d_b=d_c=d_d=d_e=0.1$, and $p_a=p_b=p_c=p_d=0.2$.%
}
\label{fig:1}
\end{figure*}

\section{Graph-Theoretic Interpretation and N-Species Extension}


The coexistence-selection rules identified above suggest a natural graph-theoretic interpretation of the hierarchical model. The interaction structure

\[
A_1 \rightarrow A_2 \rightarrow A_3 \rightarrow \cdots \rightarrow A_N
\]

can be represented as a path graph, where each vertex corresponds to a species and each edge represents a direct hierarchical interaction. The analytical results (later validated through numerical experiments) obtained for the five-species system indicate that stable coexistence occurs preferentially among species that are not directly connected in this graph. In graph-theoretic language, the surviving species tend to form independent sets of the interaction graph.

This observation leads to a simple general result for hierarchical systems containing an arbitrary number of species, assuming that long-term coexistence is possible only among species that are not directly connected by a predatory interaction.

\begin{theorem}[Maximum possible coexistence in an $N$-species hierarchy]
Consider an $N$-species hierarchical May--Leonard system whose interaction network forms a linear hierarchy

\[
A_1 \rightarrow A_2 \rightarrow A_3 \rightarrow \cdots \rightarrow A_N .
\]

If long-term coexistence is restricted to species subsets that contain no directly interacting neighbors, then the maximum number of species that can simultaneously coexist is

\[
N_{\max}
=
\left\lceil \frac{N}{2}\right\rceil .
\]
\end{theorem}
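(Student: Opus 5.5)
Under the stated hypothesis, the theorem is a purely combinatorial statement. It says that the independence number of the path graph $P_N$ is $\lceil N/2\rceil$. Here $P_N$ has vertex set $\{A_1,\dots,A_N\}$ and edges $\{A_k,A_{k+1}\}$ for $k=1,\dots,N-1$. The plan is therefore to reduce the ecological claim to this graph quantity and then prove matching lower and upper bounds. The reduction itself is immediate. The hypothesis says admissible surviving sets are exactly the subsets containing no pair $A_k,A_{k+1}$. These are precisely the independent sets of $P_N$, since the only direct interactions in the hierarchy of Eq.~\eqref{diff_all} are between consecutive levels. So $N_{\max}$ equals the maximum size of an independent set, $\alpha(P_N)$.

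\textbf{Lower bound (construction).} Take the odd-indexed species $I=\{A_1,A_3,A_5,\dots\}$, i.e.\ $A_{2k-1}$ for $1\le 2k-1\le N$. Any two members have indices differing by at least $2$, so no edge joins them. Counting gives $|I|=\lceil N/2\rceil$: for $N=2m$ one gets $m$ elements, and for $N=2m+1$ one gets $m+1$. For $N=5$ this set is exactly the configuration $(A,C,E)$ singled out earlier. I would also remark that its existence as a mean-field coexistence manifold follows the pattern of Eqs.~\eqref{eq26}--\eqref{eq27}. This is not needed for the counting claim, which is conditional on the hypothesis.

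\textbf{Upper bound (pigeonhole).} Partition the vertex set into the consecutive blocks $\{A_1,A_2\},\{A_3,A_4\},\dots$. When $N$ is odd, the last block is the singleton $\{A_N\}$. There are exactly $\lceil N/2\rceil$ blocks, and each two-element block is an edge of $P_N$. An independent set can therefore contain at most one vertex from each block, so $|S|\le\lceil N/2\rceil$ for every admissible $S$. Combined with the construction, this gives $N_{\max}=\lceil N/2\rceil$.

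The only real subtlety is not the combinatorics but stating the scope correctly. The theorem is conditional: the dynamical content, namely that only independent sets can persist, is assumed, not proved. I would state this explicitly. An optional remark could add that for odd $N$ the maximizer is unique (the odd-indexed set), while for even $N$ it is not. This follows from the same block argument, since each block must contribute exactly one vertex.
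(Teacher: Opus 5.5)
Your proof is correct, and its overall structure matches the paper's: both identify the admissible sets with non-adjacent subsets of the chain, and both attain the bound with the alternating set $\{A_1,A_3,A_5,\dots\}$.

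The difference is in the upper bound. The paper counts separators. If $k$ species are selected, the $k-1$ gaps between consecutive selected indices each contain at least one unselected species, so $2k-1\le N$. Hence $k\le (N+1)/2$, and integrality gives $k\le\lceil N/2\rceil$. You instead partition the chain into $\lceil N/2\rceil$ blocks, each an edge $\{A_{2j-1},A_{2j}\}$ or the final singleton, and apply pigeonhole. The separator count uses the linear order directly and is the most natural argument for a path. Your partition has three advantages:
\begin{itemize}
\item it produces $\lceil N/2\rceil$ with no rounding step;
\item it is a special case of the general fact that an independent set meets each part of a partition into cliques at most once, so it carries over to other interaction graphs;
\item backward induction from the block $\{A_N\}$ yields your uniqueness remark for odd $N$, which the paper does not state.
\end{itemize}

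One phrasing correction. The hypothesis says coexisting sets are \emph{contained in} the family of independent sets, not that the two families coincide. Attaining the bound therefore requires the alternating set to genuinely coexist. The paper takes this for granted. Your pointer to the coexistence manifold of Eqs.~\eqref{eq26}--\eqref{eq27} is the right justification; that manifold exists when the ratios $d_i/r_i$ agree on the selected species and $r_i>d_i$.
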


\begin{proof}
Let $k \in \mathbb{Z}^{+}$ denote the number of species belonging to a coexistence subset. Since directly connected species cannot simultaneously belong to the same coexistence subset, every pair of selected species must be separated by at least one unselected species.

Consequently, a coexistence subset containing $k$ species requires at least $k-1$ additional species acting as separators. The minimum number of species needed to accommodate such a subset is therefore

\[
k+(k-1)=2k-1.
\]

Since the hierarchy contains only $N$ species,

\[
2k-1 \le N.
\]

Rearranging yields

\[
k \le \frac{N+1}{2}.
\]

Because $k$ must be an integer,

\[
k \le \left\lceil \frac{N}{2}\right\rceil .
\]

Thus no coexistence subset can contain more than
\[
\left\lceil \frac{N}{2}\right\rceil
\]
species.

\begin{figure*}[t]
\centering
\includegraphics[width=\textwidth]{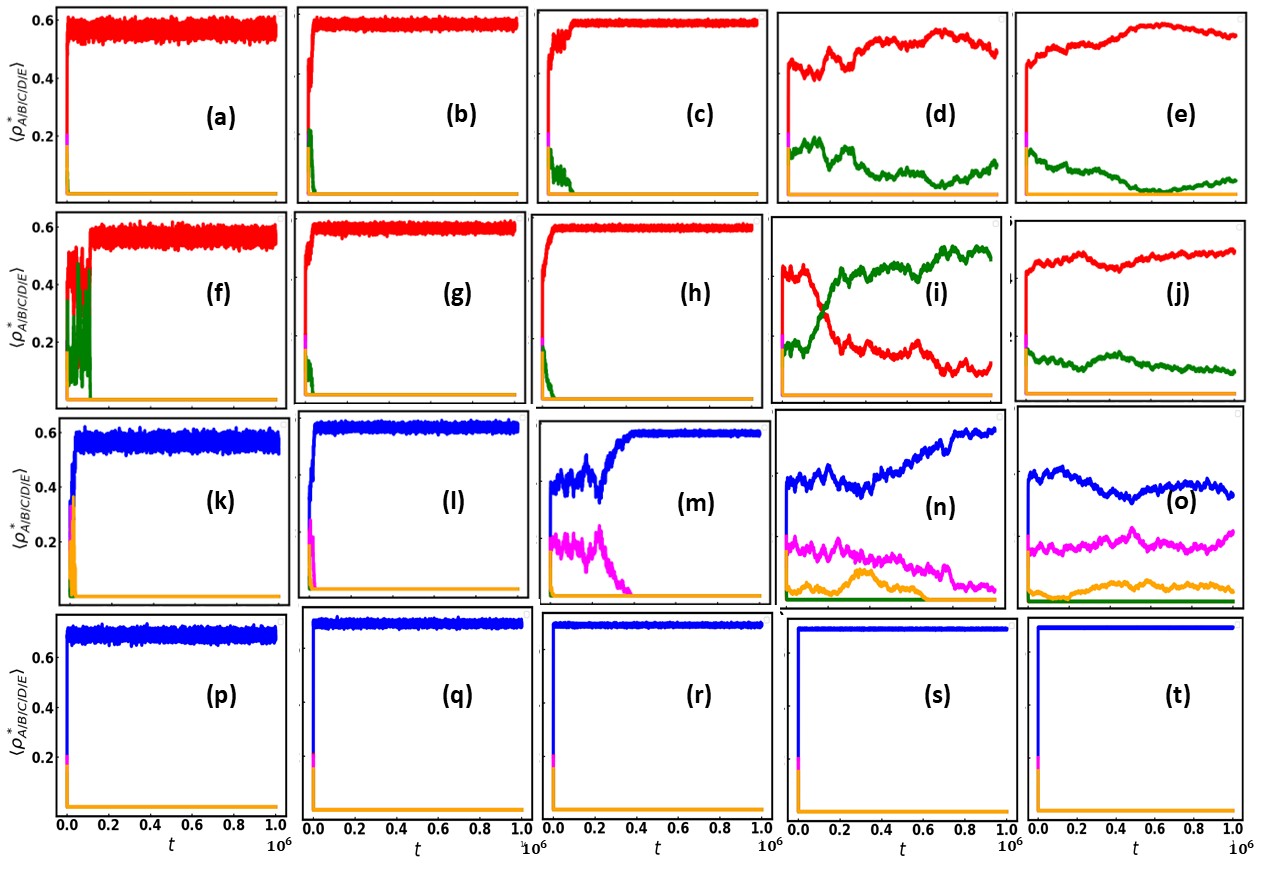}
\caption{
Species size strongly influences coexistence behavior in the stochastic system. We present the time evolution of species densities in the asymptotic time limit for different values of \(r_a\) and system sizes \(L\), where each row corresponds to a fixed \(r_a\) and each column to a fixed system size. For \(r_a = 0.1\) [panels (a)–(e)], only species \(B\) survives for smaller system sizes \(L = 50\), \(100\), and \(200\), whereas coexistence between species \(B\) and \(D\) emerges for larger sizes \(L = 500\) and \(800\). A similar qualitative behavior is observed for \(r_a = 0.2\) [panels (f)–(j)], where coexistence appears only when the system size becomes sufficiently large. For \(r_a = 0.3\) [panels (k)–(o)], species \(A\) solely survives for \(L = 50\), \(100\), and \(200\); however, for \(L = 500\), species \(A\) and \(C\) maintain nonzero densities while species \(E\) survives transiently before eventually going extinct, and for \(L = 800\), stable coexistence among species \(A\), \(C\), and \(E\) is observed. In contrast, for larger values of \(r_a\) [panels (p)–(t)], the dynamics approach a single-species absorbing state where only species \(A\) survives irrespective of the system size.  
}
\label{fig:6}
\end{figure*}

To show that this upper bound is attainable, consider the subset formed by selecting every alternate species,

\[
\{A_1,A_3,A_5,\ldots\}.
\]

No two species in this set are adjacent in the hierarchy, and therefore the subset contains exactly

\[
\left\lceil \frac{N}{2}\right\rceil
\]

species. Hence the upper bound is achieved, proving that

\[
N_{\max}
=
\left\lceil \frac{N}{2}\right\rceil .
\]

\end{proof}

For the five-species hierarchy investigated earlier, we have

\[
N=5,
\]

and therefore

\[
N_{\max}
=
\left\lceil \frac{5}{2}\right\rceil
=
3.
\]

The coexistence state $(A,C,E)$ observed in the stability analysis in Appendix \ref{Five-species} therefore realizes the largest coexistence subset permitted by the hierarchical interaction structure. More generally, the theorem suggests that coexistence in hierarchical systems is constrained by simple structural principles that become increasingly restrictive as the number of species increases. The hierarchy thus acts as a dynamical filter, selecting only a small subset of all theoretically possible biodiversity states.

Note that we explicitly assume to prove the previous theorem that the coexistence states identified in the five-species hierarchy suggest a general principle. In particular, all stable coexistence states observed in our model involve species that are not directly connected through the hierarchical interaction chain. This property can be established analytically for the simplified system with identical reproduction and mortality rates, \(r_i=r\) and \(d_i=d\).

\begin{proposition}
Consider the $N$-species hierarchical May--Leonard model

\[
A_1 \rightarrow A_2 \rightarrow A_3 \rightarrow \cdots \rightarrow A_N,
\]

with \(r_i=r\), \(d_i=d\), and \(p_i>0\) for all \(i\). Then no equilibrium can contain two adjacent species \(A_i\) and \(A_{i+1}\) simultaneously with positive densities.
\end{proposition}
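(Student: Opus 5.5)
The plan is to exploit the fact that, once $r_i=r$ and $d_i=d$ for all $i$, every surviving species has the \emph{same} intrinsic balance $r\rho_v^*-d$, so the only thing that can distinguish the equilibrium conditions of two surviving species is the predation term. I would argue by contradiction: suppose an equilibrium $\boldsymbol{\rho}^*$ has $\rho_i^*>0$ and $\rho_{i+1}^*>0$ for some $i$. Among the indices $j\le i$ such that $A_j,A_{j+1},\dots,A_{i+1}$ all have positive density, I would pick the smallest one. This gives a maximal run of consecutive surviving species $A_j,\dots,A_{i+1}$ ending at $A_{i+1}$, with $j\le i$. By maximality, either $j=1$ (so $A_j$ is the top predator $A_1$), or $\rho_{j-1}^*=0$.

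Next I would write the equilibrium conditions from the mean-field equations \eqref{diff_a}--\eqref{diff_all}, generalized to $N$ species. Since $\rho_j^*>0$, we may divide by $\rho_j^*$, and the equation for $A_j$ gives
\[
r\rho_v^*-p_{j-1}\rho_{j-1}^*-d=0 ,
\]
where the predation term is absent if $j=1$ and vanishes if $\rho_{j-1}^*=0$. In both cases $r\rho_v^*-d=0$. Since $j\le i$, the species $A_{j+1}$ also belongs to the run, so $\rho_{j+1}^*>0$. Dividing its equation by $\rho_{j+1}^*$ gives
\[
r\rho_v^*-p_j\rho_j^*-d=0 .
\]
Subtracting the two relations yields $p_j\rho_j^*=0$. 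Because $p_j>0$, this forces $\rho_j^*=0$, contradicting the assumption that $A_j$ survives. Hence no equilibrium can contain two adjacent species with positive densities.

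I do not expect a serious obstacle; the only delicate point is the boundary bookkeeping. The argument must not be applied naively to the pair $(A_i,A_{i+1})$ itself, since $A_i$ may have a surviving predator $A_{i-1}$. In that case comparing the two equations gives only $p_i\rho_i^*=p_{i-1}\rho_{i-1}^*$, which is not a contradiction. This is exactly why I pass to the top of the maximal surviving run, where the predator term is guaranteed to vanish, including the case $j=1$, where $A_1$ has no predator at all. I would also remark that the proof uses the equal-ratio assumption only through $d_j/r_j=d_{j+1}/r_{j+1}$. This is consistent with the general consecutive-pair formula of the two-species section, whose numerator $d_ir_{i+1}-d_{i+1}r_i$ vanishes in the symmetric case.
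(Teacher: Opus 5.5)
Your proof is correct and is essentially the paper's argument: the paper inducts upward along the hierarchy, subtracting the stationarity conditions of two adjacent survivors to get $p_{i-1}\rho_{i-1}^*=p_i\rho_i^*$, which pushes any coexisting adjacent pair up to $(A_1,A_2)$, where the base case gives $p_1\rho_1^*=0$. Choosing the top $A_j$ of a maximal surviving run is the minimal-counterexample form of that same induction, and your closing remark holds in the sense that a common value of $d_k/r_k$ across all species already suffices.
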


\begin{proof}
We prove the result by induction along the hierarchy. Let, $S=\sum\limits_i^N \rho_i$.

\medskip

\noindent
\textbf{Base case:} Consider the first adjacent pair \((A_1,A_2)\). Suppose that

\[
\rho_1^*>0,
\qquad
\rho_2^*>0.
\]

Since both densities are positive, their equilibrium conditions imply

\[
r(1-S)-d=0,
\]

and

\[
r(1-S)-p_1\rho_1^*-d=0.
\]

Subtracting the two equations yields

\[
p_1\rho_1^*=0.
\]

Since \(p_1>0\), it follows that

\[
\rho_1^*=0,
\]

which contradicts the assumption \(\rho_1^*>0\). Therefore species \(A_1\) and \(A_2\) cannot coexist.

\medskip

\noindent
\textbf{Induction hypothesis:} Assume that for some \(i\ge 2\), the adjacent pair \((A_{i-1},A_i)\) cannot coexist. Equivalently,

\[
\rho_{i-1}^*>0
\quad\Longrightarrow\quad
\rho_i^*=0.
\]

\medskip

\noindent
\textbf{Induction step:} We show that the pair \((A_i,A_{i+1})\) also cannot coexist.

Suppose, on the contrary, that

\[
\rho_i^*>0,
\qquad
\rho_{i+1}^*>0.
\]

Since both species are present at equilibrium, their stationarity conditions are

\[
r(1-S)-p_{i-1}\rho_{i-1}^*-d=0,
\]

and

\[
r(1-S)-p_i\rho_i^*-d=0.
\]

Subtracting these two equations gives

\[
p_{i-1}\rho_{i-1}^*
=
p_i\rho_i^*.
\]

Because \(p_i>0\) and \(\rho_i^*>0\), the right-hand side is strictly positive. Therefore,

\[
p_{i-1}\rho_{i-1}^*>0,
\]

which implies

\[
\rho_{i-1}^*>0.
\]

Hence we obtain

\[
\rho_{i-1}^*>0,
\qquad
\rho_i^*>0.
\]

However, this contradicts the induction hypothesis that species \(A_{i-1}\) and \(A_i\) cannot coexist.

Therefore the assumption

\[
\rho_i^*>0,
\qquad
\rho_{i+1}^*>0
\]

must be false. Consequently, species \(A_i\) and \(A_{i+1}\) cannot coexist.

Thus, by mathematical induction, no equilibrium can contain two adjacent species simultaneously with positive densities.
\end{proof}

The proposition implies that every coexistence equilibrium must be composed exclusively of mutually nonadjacent species in the symmetric framework with identical reproduction and mortality rates, $r_i=r$ and $d_i=d$ for all species. In graph-theoretic language, the surviving species form an independent set of the hierarchical interaction graph and by the previous theorem, for an $N$-species hierarchy, the largest coexistence subset contains

\[
N_{\max}
=
\left\lceil \frac{N}{2}\right\rceil
\]

species.

\subsection{Enumeration of admissible coexistence states}

Here, based on the analytical results obtained for the five-species system, we assume that long-term coexistence is restricted to subsets of mutually nonadjacent species. In graph-theoretic terms, the surviving species are assumed to form independent sets of the hierarchy graph. This assumption allows us to derive general coexistence bounds and counting results for hierarchical systems containing an arbitrary number of species. Under the independent-set assumption, the number of admissible coexistence states can be enumerated exactly.

Let \(I_N\) denote the number of independent sets of the path graph \(P_N\).

For \(N=1\), the independent sets are

\[
\varnothing,\qquad \{A_1\},
\]

so that

\[
I_1=2.
\]

For \(N=2\), the independent sets are

\[
\varnothing,\qquad \{A_1\},\qquad \{A_2\},
\]

and therefore

\[
I_2=3.
\]

For a path graph containing \(N\) vertices, every independent set belongs to one of two classes:

\begin{enumerate}
\item \(A_N\) is not selected. Then the remaining vertices form an arbitrary independent set of \(P_{N-1}\), giving \(I_{N-1}\) possibilities.

\item \(A_N\) is selected. Then \(A_{N-1}\) cannot be selected, and the remaining vertices form an arbitrary independent set of \(P_{N-2}\), giving \(I_{N-2}\) possibilities.
\end{enumerate}

Hence

\[
I_N=I_{N-1}+I_{N-2}.
\]

Together with

\[
I_1=2,
\qquad
I_2=3,
\]

this recurrence generates

\[
2,3,5,8,13,\ldots,
\]

which is precisely the Fibonacci sequence shifted by two indices. Therefore

\[
I_N=F_{N+2}.
\]

Consequently, the number of admissible coexistence states grows only according to the Fibonacci sequence, whereas the total number of possible species subsets grows as $2^N.$

Therefore

\[
\frac{I_N}{2^N}
=
\frac{F_{N+2}}{2^N}.
\]

Using the asymptotic relation
\[
F_n\sim \frac{\phi^n}{\sqrt5},
\qquad
\phi=\frac{1+\sqrt5}{2},
\]
we obtain
\[
\frac{F_{N+2}}{2^N}
\sim
\frac{\phi^2}{\sqrt5}
\left(\frac{\phi}{2}\right)^N.
\]
Since \(\phi/2<1\), the ratio tends to zero as \(N\to\infty\), i.e.,
\[
\lim_{N\to\infty}
\frac{F_{N+2}}{2^N}
=
0.
\]

Therefore, increasing community size does not necessarily increase the number of viable coexistence states. Instead, the hierarchical interaction structure eliminates an increasingly large proportion of potential species assemblages, thereby restricting long-term coexistence to a relatively small subset of all possible biodiversity states.

\begin{figure}[t]
\centering
\includegraphics[width=\columnwidth]{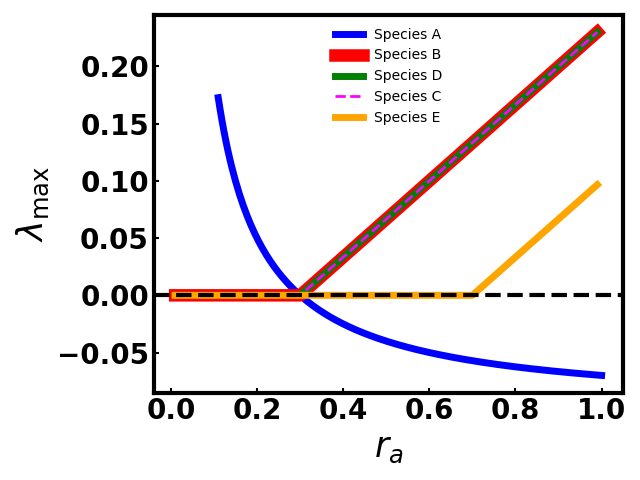}
\caption{
Stability analysis of single-species equilibria is performed by numerically plotting the largest eigenvalue of the Jacobian matrix associated with single-species equilibria. From the plot, it is evident that up to $r_a = 0.3$, species A remains unstable while the other species are neutrally stable along this line. Beyond $r_a = 0.3$, species B, C, and D enter an unstable regime, whereas species E continues to remain neutrally stable. In contrast, species A transitions to a stable regime after $r_a = 0.3$, as indicated by its largest eigenvalue becoming negative. The remaining parameters are set as $r_b = r_c = r_d = r_e = 0.3$, $d_a = d_b = d_c = d_d = d_e = 0.1$, and $p_a = p_b = p_c = p_d = 0.2$.
}
\label{fig:2}
\end{figure}

\section{Numerical Simulations} \label{numerical}

To complement the analytical results and investigate the long-time behavior of the hierarchical May-Leonard model, we perform both deterministic and stochastic simulations. Throughout this section, the parameters are fixed at
$r_b=r_c=r_d=r_e=0.3,
d_a=d_b=d_c=d_d=d_e=0.1,
p_a=p_b=p_c=p_d=0.2,$ while the reproduction rate of species \(A\), \(r_a\), is varied systematically. The numerical simulations are used to test the coexistence-selection rules derived in the previous sections and to determine how demographic fluctuations modify the deterministic predictions. It should be emphasized that the transition sequence reported here is obtained by varying $r_a$ while keeping the remaining parameters fixed. The purpose is not to identify universal critical values, but rather to illustrate the generic coexistence-selection mechanism generated by hierarchical interactions.

\subsection{Mean-Field and Stochastic Dynamics}

Figure~\ref{fig:1} compares the numerical solutions of the mean-field equations with Monte Carlo simulations for different values of \(r_a\). For small values of \(r_a\), the deterministic dynamics converge to a coexistence state involving species \(B\) and \(D\) (see left column, Fig.~\ref{fig:1}). Increasing \(r_a\) to approximately \(0.3\) produces a transition to the non-adjacent coexistence state \((A,C,E)\), while further increasing \(r_a\) results in complete dominance of species \(A\). These results are consistent with the analytical prediction that only selected subsets of species remain stable under hierarchical competition. The linear stability analysis of the \((A,C,E)\) coexistence manifold for $r_i=r$ and $d_i=d$ shows that all transverse eigenvalues are non-positive, while the remaining zero eigenvalues correspond to neutral directions along the coexistence manifold. Details are provided in Appendix~\ref{Five-species}.

{ We validate these results by numerically integrating Eqs.\ \eqref{diff_a} and~\eqref{diff_all} and comparing their asymptotic behavior with analytical predictions (Fig.~\ref{fig:1}). In parallel, stochastic Monte Carlo (MC) simulations are performed on a two-dimensional lattice showing overall agreement with the ODE dynamics (see Appendix~\ref{monte-carlo} for the MC scheme). Figure~\ref{fig:1} illustrates the time evolution for different $r_a$. For $r_a = 0.05$ (Fig ~\ref{fig:1} a) and $0.2$ (Fig.~\ref{fig:1} d), the ODE model predicts coexistence of species B and D, whereas MC simulations yield single-species survival per realization (either B or D) (Fig.~\ref{fig:1} b,c and e,f). For $r_a = 0.3$, (Fig.~\ref{fig:1} g) the ODE model exhibits coexistence of species A, C, and E, while MC simulations (Fig ~\ref{fig:1} h,i) again result in dominance of a single species in each realization. At $r_a = 0.4$, (Fig.~\ref{fig:1} k and l) both approaches show consistent behavior. Notably, for lower $r_a$, the total density in the ODE model closely matches the summed densities from MC realizations, with deviations attributable to the mean-field approximation in the ODE framework versus nearest-neighbor interactions in the stochastic model. }

\begin{figure}[t]
\centering
\includegraphics[width=\columnwidth]{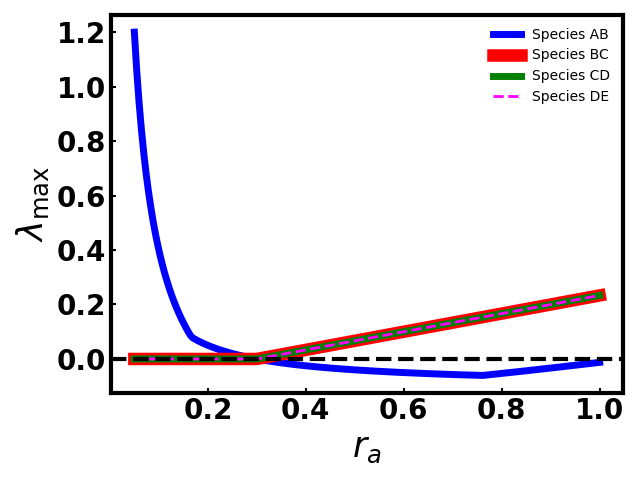}
\caption{
The largest eigenvalue of the Jacobian matrix associated with two-species coexistence equilibria is numerically plotted in order to perform the stability analysis of two-species coexistence equilibria. The resulting plot shows that species AB remains unstable up to $r_a = 0.3$, while the other species exhibit neutral stability along this line. However, species BC, CD, and DE go into an unstable phase when $r_a$ is greater than 0.3. In contrast, species AB's largest eigenvalue becomes negative after $r_a = 0.3$, indicating a change to a stable state; nonetheless, the fixed point is not feasible in this region. $r_b = r_c = r_d = r_e = 0.3$, $d_a = d_b = d_c = d_d = d_e = 0.1$, and $p_a = p_b = p_c = p_d = 0.2$ are the values of the other parameters.
}
\label{fig:3}
\end{figure}

\subsection{Finite-Size Effects}

The role of demographic fluctuations is further examined in Fig.\ \ref{fig:6}, which presents Monte Carlo simulations for different system sizes at different values of \(r_a\). For small populations (i.e. small system-size), demographic fluctuations dominate the dynamics and rapidly eliminate coexistence. However, increasing the system-size significantly enhances the persistence of multiple species. Figure~\ref{fig:6} illustrates that for relatively small populations, the dynamics are dominated by single-species survival. Specifically, for system sizes \(L = 50\), \(100\), and \(200\) (shown in the first three columns), only one species survives within the considered time window, irrespective of the value of \(r_a\).
However, as the population size is further increased, a qualitative change in the dynamics emerges. In the fourth and fifth columns, corresponding to \(L = 500\) and \(800\), coexistence among species is observed for \(r_a = 0.1\), \(0.2\), and \(0.3\). These results indicate that increasing the system size substantially enhances the probability of long-term species coexistence in the stochastic system. Moreover, the corresponding ODE-based mean-field model also predicts coexistence of multiple species for \(r_a = 0.1\), \(0.2\), and \(0.3\). 
In particular, the coexistence states predicted by the mean-field equations become increasingly visible in the MC results with growing population size. These results indicate that the discrepancy between deterministic and stochastic dynamics can be attributed to the finite-size effects mainly.

\subsection{Stability of Single-Species Equilibria}

To understand the emergence of single-species dominance, we compute the largest Jacobian eigenvalue associated with each single-species equilibrium. The results are shown in Fig.~\ref{fig:2}. A clear stability transition occurs near \(r_a = 0.3\). For smaller values of \(r_a\), the species-\(A\) equilibrium is unstable. As \(r_a\) increases, its dominant eigenvalue becomes negative, indicating stabilization of the species-\(A\) state. Simultaneously, the remaining single-species equilibria become unstable. This transition explains the emergence of species-\(A\) dominance observed in Fig.~\ref{fig:1} for sufficiently large values of \(r_a\). The numerical observations presented here are consistent with the corresponding linear stability analysis, which is provided in detail in Appendix~\ref{Five-species}.

\subsection{Stability of Two-Species Coexistence States}

The largest eigenvalues associated with representative two-species coexistence states are shown in Fig.~\ref{fig:3}. The coexistence state involving species \(A\) and \(B\) undergoes a stability transition as \(r_a\) varies, whereas the coexistence states \((B,C)\), \((C,D)\), and \((D,E)\) remain unstable throughout most of the parameter range. These results support the analytical prediction that direct hierarchical interactions tend to destabilize coexistence. Consequently, only a restricted subset of coexistence states remains dynamically relevant.

\subsection{Probability of Long-Time Ecological Outcomes}


\begin{figure}[t]
\centering
\includegraphics[width=\columnwidth]{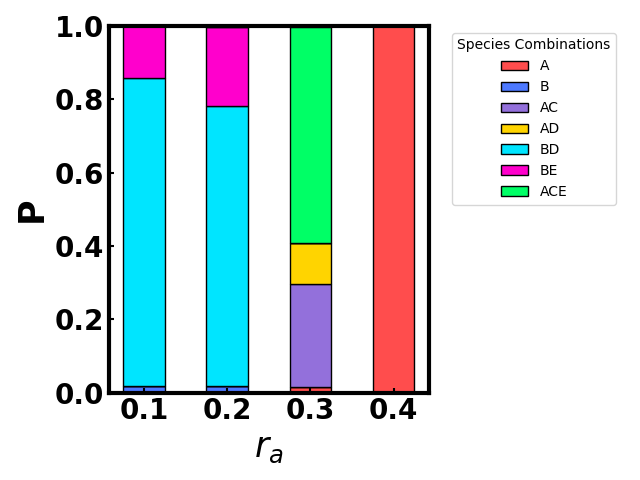}
\caption{
The survival probabilities of different species were analyzed under varying initial conditions, with four distinct values of the reproduction rate of species A ($r_a$). For $r_a = 0.1$, species B and D consistently coexist. When $r_a = 0.2$, the system exhibits bistability, where either species B and D coexist, or species B and E coexist, depending on the initial conditions. At $r_a = 0.3$, the dynamics become more diverse: in some cases, species A, C, and E coexist, while in others, only species A and C persist. For $r_a = 0.4$ and higher, the system stabilizes to a single-species outcome, with only species A surviving across all initial conditions. In this simulation, we randomly change the initial conditions for all five species, keeping the total density of all species and the vacant space constant.}
\label{fig:4}
\end{figure}

\begin{figure*}[t]
\centering
\includegraphics[width=\textwidth]{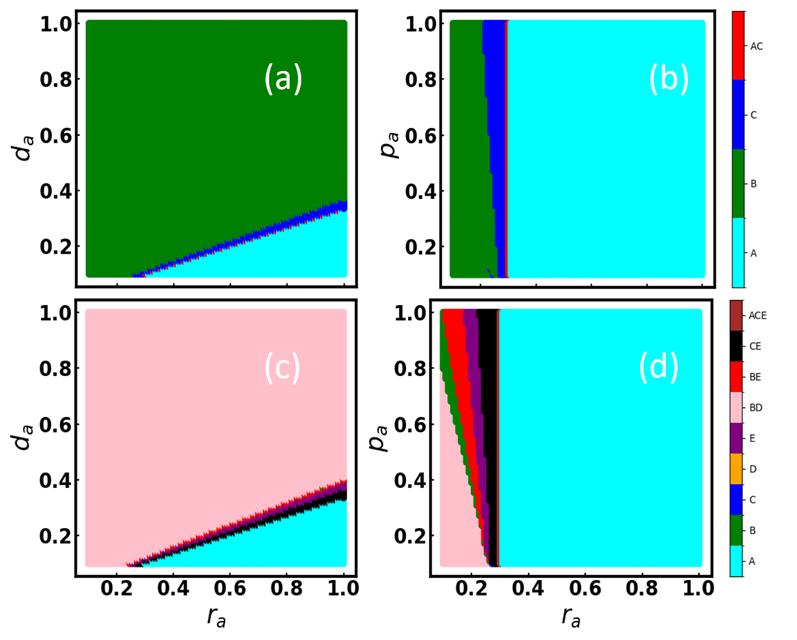}
\caption{
Phase-space diagrams showing the asymptotic (long-time) dynamics of multi-species systems under systematic variation of key parameters. (a) For the three-species model given in Appendix \ref{Three-species}, the death rate $d_a$ and reproduction rate $r_a$ of species A are varied, and the governing equations, Eqs.\ \eqref{diff_a} and \eqref{diff_all}, are numerically integrated to the steady state. (b) Same as (a), but with the predation rate $p_a$ and reproduction rate $r_a$ of species A as control parameters. (c) and (d) Corresponding phase-space plots for the five-species system, using the same parameter variations as in (a) and (b), respectively, illustrating the effect of increased system dimensionality on the emergent dynamics. All remaining parameters are fixed at $r_b = r_c = r_d = r_e = 0.3$, $d_b = d_c = d_d = d_e = 0.1$, and $p_b = p_c = p_d = 0.2$. For panels (a) and (c), $p_a = 0.2$, while for panels (b) and (d), $d_a = 0.1$.  
}
\label{fig:5}
\end{figure*}

To quantify the stochastic selection of coexistence states, we perform many independent realizations and recorded the final ecological outcome. The corresponding probabilities are shown in Fig.~\ref{fig:4}. 
For \(r_a=0.1\) and \(0.2\), the coexistence state \((B, D)\) dominates the probability distribution. Near \(r_a=0.3\), several competing coexistence states become accessible, including \((A,C)\), \((A,D)\), and \((A,C,E)\). This region corresponds to a redistribution of stability among competing attractors. For \(r_a=0.4\), the probability becomes concentrated entirely on the species-\(A\) state, indicating complete dominance. The probability distributions therefore provide direct evidence for parameter-induced stability transitions and confirm that changes in \(r_a\) systematically alter the selected biodiversity state.

\subsection{Global Phase Structure}

The overall organization of coexistence states in parameter space is illustrated in Fig.~\ref{fig:5}. The phase diagrams reveal several distinct ecological regions separated by well-defined transition boundaries. Large portions of parameter space are occupied by single-species dominance states, whereas coexistence states occur only within specific regions. Among the coexistence states, non-adjacent combinations such as \((A,C,E)\) occupy comparatively larger regions than coexistence states involving directly interacting species. These results provide a global visualization of the coexistence-selection rules identified analytically. Rather than producing arbitrary biodiversity patterns, the hierarchical interaction structure restricts the set of admissible coexistence states and favors species combinations that minimize direct predatory conflicts.

Overall, the numerical simulations strongly support the analytical framework developed in this study. The results demonstrate that coexistence in the hierarchical May-Leonard model is governed by two complementary mechanisms: a structural selection rule imposed by the interaction hierarchy and a dynamical selection rule controlled by demographic parameters. Together, these mechanisms determine which species subsets persist in the long-term limit and how biodiversity is reorganized as ecological conditions vary.

\section{Conclusion}

In this work, we have investigated a hierarchical extension of the May-Leonard model in which species interact through a directed predation chain rather than a cyclic interaction network. By incorporating reproduction, natural mortality, and hierarchical predation within a mean-field framework, we have characterized the equilibrium structure of a five-species competitive system and have identified the coexistence states permitted by the hierarchy.

The analysis has revealed a rich spectrum of ecological outcomes, including extinction, single-species survival, partial coexistence, and full coexistence. A central result of the study is that the interior coexistence state, in which all species survive simultaneously, is generically unstable for positive demographic and predation parameters. Consequently, the long-time dynamics are redirected toward lower-dimensional coexistence states involving only selected subsets of species.

The equilibrium structure further reveals that coexistence is not arbitrary. Non-adjacent species can coexist through continuous coexistence manifolds, whereas directly interacting species are subject to stronger competitive constraints. This leads to a coexistence-selection mechanism in which the interaction hierarchy itself restricts the set of dynamically viable biodiversity states. In particular, coexistence among non-adjacent species emerges as a recurring and robust feature of the hierarchical system.

Numerical simulations support the analytical predictions and demonstrate that demographic parameters redistribute stability among competing coexistence states. As the control parameter $r_a$ is varied, the system undergoes transitions between different coexistence configurations and single-species dominance states. Stochastic simulations further show that demographic fluctuations can eliminate coexistence in finite populations, although increasing system size progressively restores the coexistence patterns predicted by the deterministic mean-field description.

Overall, the present work extends the May-Leonard framework from cyclic competition to hierarchical interaction networks and demonstrates that biodiversity patterns are governed by a combination of interaction topology and demographic parameters. The resulting coexistence-selection rules provide a theoretical basis for understanding how hierarchical competitive structures organize species persistence and community composition in complex ecological systems.

The framework developed here opens several directions for future research, including the study of more general hierarchical networks, environmental heterogeneity, and stochastic spatial dynamics. Extending the present analysis beyond the symmetric parameter regime and toward arbitrary interaction topologies may ultimately contribute to a more general theory of coexistence and the organization of biodiversity in hierarchical ecological systems. Natural extensions include incorporating spatial structure to investigate pattern formation and to reconcile deterministic mean-field predictions with stochastic dynamics; generalizing the analysis to arbitrary $N$-species hierarchies and more general directed interaction networks; and accounting for environmental heterogeneity, temporal variability, and adaptive ecological interactions. Complementing Monte Carlo simulations with analytical stochastic approaches may further elucidate the roles of demographic fluctuations and extinction phenomena. Together, these developments would broaden the applicability of the present framework and contribute to a more general theoretical understanding of how hierarchical interactions govern species coexistence, biodiversity organization, and ecological stability in complex ecological communities.

{\color{red}



}

{\color{black}

\appendix
\section{Three-Species Hierarchical Model}
\label{Three-species}

To provide analytical insight into the coexistence-selection mechanisms of the five-species hierarchical model, we consider here a reduced system consisting of three species arranged in a similar hierarchical fashion. Despite its simplicity, this reduced model captures the essential mechanisms responsible for coexistence, exclusion, and stability transitions observed in the full system.

\subsection{Model Equations}

Let $\rho_a$, $\rho_b$, and $\rho_c$ denote the densities of species $A$, $B$, and $C$, respectively, and let $\rho_v = 1-\rho_a-\rho_b-\rho_c$ denotes the density of vacant sites. The mean-field equations are

\begin{align}
\dot{\rho}_a
&=
\rho_a
\left(
r_a \rho_v - d_a
\right),
\\
\dot{\rho}_b
&=
\rho_b
\left(
r_b \rho_v
-
p_a \rho_a
-
d_b
\right),
\\
\dot{\rho}_c
&=
\rho_c
\left(
r_c \rho_v
-
p_b \rho_b
-
d_c
\right).
\end{align}

Species $A$ suppresses species $B$, species $B$ suppresses species $C$, and no direct interaction exists between species $A$ and $C$.

\subsection{Equilibrium States}

Setting $\dot{\rho}_a=\dot{\rho}_b=\dot{\rho}_c=0$ yields the following equilibrium configurations:

\begin{itemize}

\item Extinction equilibrium:
\[
E_0=(0,0,0).
\]

\item Single-species equilibria:
\begin{align*}
E_A &= \left(1-\frac{d_a}{r_a},0,0\right),\\
E_B &= \left(0,1-\frac{d_b}{r_b},0\right),\\
E_C &= \left(0,0,1-\frac{d_c}{r_c}\right).
\end{align*}

\item Two-species coexistence equilibrium:
\[
E_{AC}=(\rho_a^*,0,\rho_c^*),
\qquad
\rho_a^*+\rho_c^*
=
1-\frac{d_a}{r_a},
\]
provided
\[
\frac{d_a}{r_a}
=
\frac{d_c}{r_c}.
\]

\item Interior coexistence equilibrium:
\[
E_{ABC}
=
(\rho_a^*,\rho_b^*,\rho_c^*),
\]
with
\[
\rho_a^*
=
\frac{r_b(d_a/r_a)-d_b}{p_a},
\qquad
\rho_b^*
=
\frac{r_c(d_a/r_a)-d_c}{p_b},
\]
and
\[
\rho_c^*
=
1-\frac{d_a}{r_a}
-\rho_a^*
-\rho_b^*.
\]

\end{itemize}

\subsection{Global Stability of the Extinction Equilibrium}

In the following, we derive a sufficient condition for global extinction of the three-species model. We show that if the mortality rate of every species exceeds its reproduction rate, then all populations ultimately vanish and the extinction equilibrium is globally asymptotically stable.

\begin{theorem}
If
\[
r_a<d_a,
\qquad
r_b<d_b,
\qquad
r_c<d_c,
\]
then the extinction equilibrium
\[
E_0=(0,0,0)
\]
is globally asymptotically stable in the nonnegative orthant
\[
\mathbb{R}_{\ge0}^{3}.
\]
\end{theorem}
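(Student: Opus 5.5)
The plan is a direct comparison argument that bounds each per-capita growth rate by a strictly negative constant. The orthant $\mathbb{R}_{\ge0}^3$ is positively invariant because each equation has the form $\dot\rho_i=\rho_i f_i(\rho)$. We also work in the physical region $\rho_a+\rho_b+\rho_c\le 1$, so that $\rho_v\in[0,1]$. This simplex is invariant as well: on the face $\rho_v=0$ we have
\[
\dot\rho_v=-\dot\rho_a-\dot\rho_b-\dot\rho_c=d_a\rho_a+(d_b+p_a\rho_a)\rho_b+(d_c+p_b\rho_b)\rho_c\ge0 .
\]
If one insists on the whole orthant, $\rho_v$ may become negative there. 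In that case the bound below still holds, and in fact improves, since $r_i\rho_v\le r_i$ whenever $\rho_v\le1$.

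With $\rho_v\le1$ and nonnegative densities, each per-capita rate satisfies
\[
r_a\rho_v-d_a\le r_a-d_a,\qquad r_b\rho_v-p_a\rho_a-d_b\le r_b-d_b,\qquad r_c\rho_v-p_b\rho_b-d_c\le r_c-d_c .
\]
Set $\mu=\min\{d_a-r_a,\,d_b-r_b,\,d_c-r_c\}>0$. Then $\dot\rho_i\le-\mu\rho_i$ for each $i$, and Gr\"onwall gives $\rho_i(t)\le\rho_i(0)e^{-\mu t}$. This yields global attractivity at a uniform exponential rate. For Lyapunov stability I would take the linear function $V=\rho_a+\rho_b+\rho_c$, which satisfies $\dot V\le-\mu V$. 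A Lyapunov-function argument of this kind also covers the stability requirement directly; alternatively, the Jacobian at $E_0$ is $\mathrm{diag}(r_a-d_a,\,r_b-d_b,\,r_c-d_c)$, whose eigenvalues are all negative. Global attractivity together with stability gives global asymptotic stability.

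The main point needing care is the domain issue: $\rho_v\le1$ requires nonnegativity, and $\rho_v\ge0$ is needed only for interpretation. I will state clearly that the result holds on the invariant simplex, and that it holds on the full orthant because the upper bound $\rho_v\le 1$ is all the argument uses.
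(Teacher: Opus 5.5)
Your proof is correct and follows essentially the same route as the paper. The paper also uses $V=\rho_a+\rho_b+\rho_c$, bounds each per-capita rate by $r_i-d_i$ via $\rho_v\le 1$ and the nonnegativity of the predation terms, and concludes $\dot V<0$ away from the origin. Your $\mu=\min_i(d_i-r_i)$, giving $\dot V\le-\mu V$ and per-species exponential decay, is a mild sharpening that makes global convergence explicit without a Lyapunov theorem on an unbounded domain; your remark that only $\rho_v\le 1$ (automatic on the orthant) is needed clarifies a point the paper leaves implicit.
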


\begin{proof}
Consider the Lyapunov function
\[
V(\rho_a,\rho_b,\rho_c)
=
\rho_a+\rho_b+\rho_c
=
S.
\]

Since \(V\ge0\) and \(V=0\) only at the origin, \(V\) is positive definite on
\(\mathbb{R}_{\ge0}^{3}\).

Differentiating along trajectories yields
\[
\dot V
=
\dot\rho_a+\dot\rho_b+\dot\rho_c.
\]

Using the governing equations,
\[
\dot V
=
\rho_a(r_a\rho_v-d_a)
+
\rho_b(r_b\rho_v-p_a\rho_a-d_b)
+
\rho_c(r_c\rho_v-p_b\rho_b-d_c).
\]

Since \(\rho_v\le1\) and all predation terms are nonnegative,
\[
\dot V
\le
\rho_a(r_a-d_a)
+
\rho_b(r_b-d_b)
+
\rho_c(r_c-d_c).
\]

Under the conditions
\[
r_a<d_a,
\qquad
r_b<d_b,
\qquad
r_c<d_c,
\]
it follows that
\[
\dot V<0
\]
for every nonzero state in \(\mathbb{R}_{\ge0}^{3}\).

Hence \(V\) is a strict Lyapunov function. Since the nonnegative orthant is forward invariant, every trajectory converges to the origin. Therefore the extinction equilibrium \(E_0\) is globally asymptotically stable.
\end{proof}

\subsection{Local Stability of Single-Species Equilibria}

The three-species model admits three single-species equilibria, which exist whenever $r_i>d_i$. Linearization about each equilibrium yields the following eigenvalue spectra:

\[
\begin{aligned}
\mathrm{Spec}(E_A)
&=
\left\{
-r_a\rho_a^*,
r_b(1-\rho_a^*)-p_a\rho_a^*-d_b,
r_c(1-\rho_a^*)-d_c
\right\},
\\[2mm]
\mathrm{Spec}(E_B)
&=
\left\{
r_a-d_a,
-r_b\rho_b^*,
r_c(1-\rho_b^*)-p_b\rho_b^*-d_c
\right\},
\\[2mm]
\mathrm{Spec}(E_C)
&=
\left\{
r_a-d_a,
r_b-d_b,
-r_c\rho_c^*
\right\}.
\end{aligned}
\]

Since one eigenvalue is always negative due to self-limitation, stability is determined by the signs of the remaining invasion eigenvalues. Consequently,

\begin{itemize}

\item $E_A$ is locally asymptotically stable if
\[
r_b(1-\rho_a^*)-p_a\rho_a^*-d_b<0,
\quad
r_c(1-\rho_a^*)-d_c<0.
\]

\item $E_B$ is locally asymptotically stable if
\[
r_a-d_a<0,
\quad
r_c(1-\rho_b^*)-p_b\rho_b^*-d_c<0.
\]

\item $E_C$ is locally asymptotically stable if
\[
r_a-d_a<0,
\quad
r_b-d_b<0.
\]

\end{itemize}

These conditions show that local stability of a single-species equilibrium is governed by the inability of the absent species to successfully invade the resident population.

\subsection{Local Stability of the Two-Species Coexistence State}

The three-species model admits a coexistence equilibrium involving the nonadjacent species $A$ and $C$,

\[
E_{AC}=(\rho_a^*,0,\rho_c^*),
\]

with

\[
\rho_a^*+\rho_c^*
=
1-\frac{d_a}{r_a},
\]

provided that

\[
\frac{d_a}{r_a}
=
\frac{d_c}{r_c}.
\]

The Jacobian matrix evaluated at $E_{AC}$ is

\[
J(E_{AC})=
\begin{pmatrix}
-r_a\rho_a^* & -r_a\rho_a^* & -r_a\rho_a^*\\
0 & r_b\rho_a^*-d_b & 0\\
-r_c\rho_c^* & 0 & -r_c\rho_c^*
\end{pmatrix}.
\]

The corresponding eigenvalues are

\[
\lambda_1=0,
\qquad
\lambda_2=-(r_a\rho_a^*+r_c\rho_c^*),
\qquad
\lambda_3=r_b\rho_a^*-d_b.
\]

The zero eigenvalue reflects the continuum of equilibria satisfying

\[
\rho_a^*+\rho_c^*
=
1-\frac{d_a}{r_a},
\]

and therefore corresponds to perturbations along the coexistence manifold. Since

\[
\lambda_2<0,
\]

the coexistence state is stable in one transverse direction. The remaining stability condition is determined by the invasion eigenvalue

\[
\lambda_3=r_b\rho_a^*-d_b.
\]

Hence $E_{AC}$ is locally stable whenever

\[
r_b\rho_a^*-d_b<0,
\]

and unstable otherwise. Consequently, the coexistence equilibrium is non-hyperbolic, neutrally stable along the coexistence manifold, and locally stable to transverse perturbations when the absent species $B$ cannot invade. This result demonstrates that coexistence naturally emerges between nonadjacent species in the hierarchy. Since species $A$ and $C$ do not interact directly, the inhibitory chain $A\rightarrow B\rightarrow C$ weakens effective competition and permits the formation of a stable coexistence manifold.

\subsection{Instability of the Interior Coexistence Equilibrium}

The Jacobian evaluated at $E_{ABC}$ is

\[
J^*
=
-
\begin{pmatrix}
r_a\rho_a^* & r_a\rho_a^* & r_a\rho_a^*\\
(r_b+p_a)\rho_b^* & r_b\rho_b^* & r_b\rho_b^*\\
r_c\rho_c^* & (r_c+p_b)\rho_c^* & r_c\rho_c^*
\end{pmatrix}.
\]

The characteristic polynomial is

\[
\lambda^3+a_1\lambda^2+a_2\lambda+a_3=0,
\]

with

\[
a_1
=
r_a\rho_a^*
+
r_b\rho_b^*
+
r_c\rho_c^*,
\]

\[
a_2
=
-
\left(
p_a r_a\rho_a^*\rho_b^*
+
p_b r_b\rho_b^*\rho_c^*
\right),
\]

and

\[
a_3
=
p_a p_b r_a
\rho_a^*\rho_b^*\rho_c^*.
\]

Since all parameters and equilibrium densities are positive,

\[
a_1>0,
\qquad
a_2<0,
\qquad
a_3>0.
\]

The Routh--Hurwitz criterion for a cubic polynomial requires

\[
a_1>0,
\qquad
a_2>0,
\qquad
a_3>0,
\qquad
a_1a_2>a_3.
\]

However, the coefficient $a_2$ is strictly negative. Therefore the Routh--Hurwitz conditions cannot be satisfied.

Consequently, at least one eigenvalue possesses a positive real part, implying that the interior coexistence equilibrium is always unstable whenever it exists.

Thus, {\it the interior coexistence equilibrium $E_{ABC}$ is unstable for all positive parameter values.} This result provides a simple analytical explanation for the emergence of lower-dimensional coexistence states. Rather than supporting stable coexistence of all three species, the hierarchical interaction structure destabilizes the interior equilibrium and drives the system toward coexistence among nonadjacent species or toward single-species dominance.

\section{Stability of Lower-Dimensional Equilibria for the Five-Species Hierarchical Model}
\label{Five-species}

In this appendix, we summarize the stability properties of the extinction, single-species, two-species, three-species, and four-species equilibrium states of the hierarchical May--Leonard model. The analytical results complement the numerical stability analyses presented in Figs.\ (3) and (4) of the main text.

\subsection{Extinction Equilibrium}

The extinction equilibrium is given by $(\rho_a^*,\rho_b^*,\rho_c^*,\rho_d^*,\rho_e^*)=(0,0,0,0,0).$ Linearization about this state yields a diagonal Jacobian with eigenvalues $\lambda_i = r_i-d_i,  i=a,b,c,d,e.$ Therefore, the extinction state is stable only when $r_i<d_i, \forall i$. Whenever at least one species satisfies $r_i>d_i$, the extinction equilibrium becomes unstable and invasion of the empty state becomes possible.

\begin{theorem}
Consider the five-species hierarchical model on the nonnegative orthant with all parameters \(r_i,p_i,d_i\) nonnegative and \(p_i>0\) for predation terms.
If
	\[
	r_i < d_i \qquad\text{for } i\in\{a,b,c,d,e\},
	\]
	then the origin \(\boldsymbol 0=(0,0,0,0,0)\) is globally asymptotically
	stable in \(\mathbb{R}_{\ge 0}^5\).
\end{theorem}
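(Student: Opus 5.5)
We mimic the Lyapunov argument used for the three-species theorem in Appendix~\ref{Three-species}, taking as Lyapunov function the total occupied density
\[
V(\boldsymbol\rho)=S=\rho_a+\rho_b+\rho_c+\rho_d+\rho_e .
\]
On the nonnegative orthant $V\ge 0$, and $V=0$ only at $\boldsymbol 0$.

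\textbf{Step 1: invariance.} Each equation \eqref{diff_a}--\eqref{diff_all} has the form $\dot\rho_i=\rho_i g_i(\boldsymbol\rho)$. Hence the faces $\rho_i=0$ are invariant and $\mathbb{R}^5_{\ge0}$ is forward invariant.

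\textbf{Step 2: bound $\dot V$ by a linear form.} Summing the equations gives
\[
\dot V=\sum_i \rho_i(r_i\rho_v-d_i)-\sum_{i=b}^{e}p_{i-1}\rho_{i-1}\rho_i .
\]
The predation sum is nonnegative on the orthant, so we may drop it. The vacancy term needs care, because $\rho_v=1-S$ is negative whenever $S>1$ and the statement is posed on all of $\mathbb{R}^5_{\ge0}$. In that region, however, $r_i\rho_v\le 0\le r_i$ trivially. In all cases $\rho_v\le 1$, and $r_i\ge 0$ gives $r_i\rho_v\le r_i$. Therefore
\[
\dot V\le\sum_i(r_i-d_i)\rho_i\le -\mu V,\qquad \mu=\min_i(d_i-r_i)>0 .
\]

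\textbf{Step 3: conclude.} The differential inequality gives $V(t)\le V(0)e^{-\mu t}$ for all $t\ge 0$. Since each $\rho_i\le V$, every component decays exponentially, which is global attractivity. Lyapunov stability follows from the same estimate, since $\|\boldsymbol\rho(t)\|_1=V(t)\le V(0)$. This yields global asymptotic (indeed exponential) stability, with no LaSalle argument needed.

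\textbf{Step 4: global existence.} We must check that solutions exist for all $t\ge0$. This follows because $V$ is nonincreasing, so trajectories remain in the compact set $\{V\le V(0)\}\cap\mathbb{R}^5_{\ge0}$. The vector field is polynomial and hence locally Lipschitz, so there is no blow-up.

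The only delicate point is Step 2's handling of $\rho_v$ outside the physical simplex $S\le1$, together with the nonnegativity of the dropped cross terms. Both reduce to sign bookkeeping that relies on the orthant invariance from Step 1. If one restricts to the simplex instead, one also notes that it is invariant: on $S=1$ we have $\dot S=-\sum_i d_i\rho_i-\sum_{i=b}^{e}p_{i-1}\rho_{i-1}\rho_i\le 0$.
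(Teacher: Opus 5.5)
Your proof is correct and follows the paper's argument. Both use the Lyapunov function $V=S$, the bound $\dot V\le\sum_i(r_i-d_i)\rho_i$ from $\rho_v\le 1$ and the nonnegative predation terms, and forward invariance of the orthant; your step $\dot V\le-\mu V$ with Gronwall simply makes the paper's appeal to ``standard Lyapunov arguments'' explicit, and it also yields an exponential rate and handles global existence and the region $S>1$ cleanly.
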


\begin{proof}
	Define the candidate Lyapunov function
	\[
	V(\rho_a,\rho_b,\rho_c,\rho_d,\rho_e):=\rho_a+\rho_b+\rho_c+\rho_d+\rho_e = S,
	\]
	which satisfies \(V\ge0\) on \(\mathbb{R}_{\ge 0}^5\) and \(V(\boldsymbol 0)=0\).
	
	Differentiate along trajectories:
	\[
	\begin{aligned}
		\dot V &= \dot{\rho}_a+\dot{\rho}_b+\dot{\rho}_c+\dot{\rho}_d+\dot{\rho}_e \\
		&= \rho_a\big(r_a(1-S)-d_a\big)
		+\rho_b\big(r_b(1-S)-p_a\rho_a-d_b\big)\\
		& +\rho_c\big(r_c(1-S)-p_b\rho_b-d_c\big)
		+\rho_d\big(r_d(1-S)-p_c\rho_c-d_d\big)\\
		&\quad +\rho_e\big(r_e(1-S)-p_d\rho_d-d_e\big).
	\end{aligned}
	\]
	Use \(1-S=\rho_v\le1\) and \(p_i\rho_j\ge0\) to bound each per-capita growth term by \(r_i-d_i\):
	\[
	\rho_i\big(r_i(1-S)-(\text{nonneg})-d_i\big) \le \rho_i(r_i-d_i),\qquad i\in\{a,\dots,e\}.
	\]
	Summing,
	\[
	\dot V \le \sum_{i\in\{a,\dots,e\}} \rho_i (r_i-d_i).
	\]
	Under the hypothesis \(r_i<d_i\) for every \(i\), each coefficient \((r_i-d_i)\) is strictly negative, so
	\[
	\dot V \le -\sum_{i} (d_i-r_i)\rho_i \le 0,
	\]
	with equality only when \(\rho_i=0\) for every \(i\). Thus \(\dot V<0\) for all \(\boldsymbol\rho\neq\boldsymbol0\) in \(\mathbb{R}_{\ge0}^5\).
	
	Hence \(V\) is a strict Lyapunov function on \(\mathbb{R}_{\ge0}^5\setminus\{\boldsymbol0\}\). By standard Lyapunov arguments, trajectories starting in \(\mathbb{R}_{\ge0}^5\) remain in the nonnegative orthant (forward invariance) and all trajectories in \(\mathbb{R}_{\ge0}^5\) converge to \(\boldsymbol 0\), i.e. the origin is globally asymptotically stable.
\end{proof}

Among the lower-dimensional equilibria, the extinction state admits a global stability result, whereas the remaining equilibria are characterized through local stability analysis.

\subsection{Single-Species Equilibria}

The model admits five single-species equilibria of the form $
\rho_i^*=1-\frac{d_i}{r_i},$ which are feasible whenever $r_i>d_i$. The equilibrium densities and corresponding eigenvalues of the Jacobian are summarized in Table~\ref{tab:single_species}.

\begin{table}[H]
\scriptsize
\centering
\caption{Single-species equilibria and corresponding eigenvalues.}
\label{tab:single_species}

\begin{tabular}{|c|p{5.5cm}|}
\hline
 Density & Eigenvalues \\
\hline

$\rho_a^*=1-\dfrac{d_a}{r_a}$
&
$\lambda_1=-r_a\rho_a^*$,
$\lambda_2=r_b(1-\rho_a^*)-p_a\rho_a^*-d_b$,
$\lambda_3=r_c(1-\rho_a^*)-d_c$,
$\lambda_4=r_d(1-\rho_a^*)-d_d$,
$\lambda_5=r_e(1-\rho_a^*)-d_e$
\\
\hline

$\rho_b^*=1-\dfrac{d_b}{r_b}$
&
$\lambda_1=r_a-d_a$,
$\lambda_2=-r_b\rho_b^*$,
$\lambda_3=r_c(1-\rho_b^*)-p_b\rho_b^*-d_c$,
$\lambda_4=r_d(1-\rho_b^*)-d_d$,
$\lambda_5=r_e(1-\rho_b^*)-d_e$
\\
\hline

$\rho_c^*=1-\dfrac{d_c}{r_c}$
&
$\lambda_1=r_a-d_a$,
$\lambda_2=r_b-d_b$,
$\lambda_3=-r_c\rho_c^*$,
$\lambda_4=r_d(1-\rho_c^*)-p_c\rho_c^*-d_d$,
$\lambda_5=r_e(1-\rho_c^*)-d_e$
\\
\hline

$\rho_d^*=1-\dfrac{d_d}{r_d}$
&
$\lambda_1=r_a-d_a$,
$\lambda_2=r_b-d_b$,
$\lambda_3=r_c-d_c$,
$\lambda_4=-r_d\rho_d^*$,
$\lambda_5=r_e(1-\rho_d^*)-p_d\rho_d^*-d_e$
\\
\hline

$\rho_e^*=1-\dfrac{d_e}{r_e}$
&
$\lambda_1=r_a-d_a$,
$\lambda_2=r_b-d_b$,
$\lambda_3=r_c-d_c$,
$\lambda_4=r_d-d_d$,
$\lambda_5=-r_e\rho_e^*$
\\
\hline

\end{tabular}
\end{table}









\vspace{2mm}




























The stability of each equilibrium is determined by the sign of the largest eigenvalue. For the parameter set considered in the main text, the dominant eigenvalue was evaluated numerically as a function of $r_a$, and the resulting stability transitions are shown in Fig.~\ref{fig:2}.

\subsection{Two-Species Coexistence Equilibria ($r_i=r,\; d_i=d$)}

The non-adjacent two-species coexistence states form one-dimensional manifolds satisfying

\[
\rho_i^*+\rho_j^*=1-\frac{d}{r}.
\]

The corresponding eigenvalue spectra are summarized in Table~\ref{tab:two_species}.

\begin{table}[H]
\small
\centering
\caption{Two-species coexistence equilibria and corresponding eigenvalue spectra for the symmetric case $r_i=r$ and $d_i=d$.}
\label{tab:two_species}

\begin{tabular}{|c|c|p{4.0cm}|}
\hline
State & Equilibrium condition & Eigenvalues \\
\hline

$(A,C)$
&
$\rho_a^*+\rho_c^*=1-\dfrac{d}{r}$
&
$\{0,\;d-r,\;-p_a\rho_a^*,\;-p_c\rho_c^*,\;0\}$
\\
\hline

$(A,D)$
&
$\rho_a^*+\rho_d^*=1-\dfrac{d}{r}$
&
$\{0,\;d-r,\;-p_a\rho_a^*,\;-p_d\rho_d^*,\;0\}$
\\
\hline

$(A,E)$
&
$\rho_a^*+\rho_e^*=1-\dfrac{d}{r}$
&
$\{0,\;d-r,\;-p_a\rho_a^*,\;0,\;0\}$
\\
\hline

$(B,D)$
&
$\rho_b^*+\rho_d^*=1-\dfrac{d}{r}$
&
$\{0,\;d-r,\;-p_b\rho_b^*,\;-p_d\rho_d^*,\;0\}$
\\
\hline

$(B,E)$
&
$\rho_b^*+\rho_e^*=1-\dfrac{d}{r}$
&
$\{0,\;d-r,\;-p_b\rho_b^*,\;0,\;0\}$
\\
\hline

$(C,E)$
&
$\rho_c^*+\rho_e^*=1-\dfrac{d}{r}$
&
$\{0,\;d-r,\;-p_c\rho_c^*,\;0,\;0\}$
\\
\hline

\end{tabular}
\end{table}

For all coexistence states, one zero eigenvalue arises from the continuous line of equilibria. The eigenvalue $d-r$ determines stability transverse to the coexistence manifold and is negative whenever $r>d$. The remaining nonzero eigenvalues are associated with predation-induced damping and are strictly negative for positive densities. Consequently, these coexistence manifolds are neutrally stable along the coexistence direction and locally attracting in the transverse directions when $r>d$.

\subsection{Three-Species Coexistence Equilibria ($r_i=r, d_i=d$)}

The only feasible three-species coexistence state in the symmetric model is (A,C,E), for which $\rho_a^*+\rho_c^*+\rho_e^*=1-\frac{d}{r},
\qquad
\rho_b^*=\rho_d^*=0.$

The Jacobian evaluated at this equilibrium possesses the eigenvalue spectrum

$$
\mathrm{Spec}(J^*)
=\left\{
0,
-p_a\rho_a^*,
d-r,
-p_c\rho_c^*,
0
\right\}.$$

Since $p_a>0, p_c>0,$ the eigenvalues $-p_a\rho_a^*, -p_c\rho_c^*$ are strictly negative for positive equilibrium densities. Furthermore, $d-r<0$ whenever $r>d$. The two zero eigenvalues arise from the continuous family of equilibria generated by the constraint

$$\rho_a^*+\rho_c^*+\rho_e^*=1-\frac{d}{r},$$

and correspond to neutral motion along the coexistence manifold. Consequently, the (A,C,E) coexistence state is neutrally stable along the manifold and locally stable in all transverse directions when $r>d$.

\subsection{Nonexistence of Four-Species Coexistence Equilibria ($r_i=r,; d_i=d$)}

In the symmetric case $r_i=r$ and $d_i=d$, four-species coexistence equilibria cannot exist. To demonstrate this, suppose that four species coexist at equilibrium. Without loss of generality, consider the state $\rho_a^*,\rho_b^*,\rho_c^*,\rho_d^*>0, \rho_e^*=0.$

Since $\rho_a^*>0$, the steady-state condition $\dot{\rho}_a=0$ yields $r(1-S)-d=0,$ and therefore $S=1-\frac{d}{r}.$

Substituting this relation into the equilibrium condition for species B,
we have $$0=r(1-S)-p_a\rho_a^*-d,$$ gives

$$0=(r(1-S)-d)-p_a\rho_a^*
=-p_a\rho_a^*.$$

Since $p_a>0$, it follows that $\rho_a^*=0,$ which contradicts the assumption $\rho_a^*>0$.

The same argument applies to any four-species configuration. The hierarchical predation terms impose a chain of exclusion in which the coexistence conditions necessarily force at least one density to vanish. Consequently, no equilibrium with four simultaneously surviving species can satisfy the steady-state equations.

Thus, {\it no four-species coexistence equilibrium exists when $r_i=r, d_i=d.$} This result highlights a fundamental consequence of the hierarchical interaction structure: although two- and three-species coexistence states are permitted, the symmetric system cannot support coexistence of four species at equilibrium.

\subsection{Summary}

The stability analysis reveals a clear hierarchy among equilibrium states. The interior coexistence equilibrium is generically unstable, while lower-dimensional coexistence states may become stable depending on parameter values. Stable coexistence is preferentially associated with non-adjacent species, particularly the (A,C,E) configuration, whereas coexistence among directly interacting species is generally suppressed by the hierarchical predation structure. These results provide the analytical basis for the coexistence-selection rules discussed in the main text.

\section{Monte Carlo simulation in five-species system}
\label{monte-carlo}

The Monte Carlo simulation is performed on a system implemented on a two-dimensional square lattice of size $L \times L$ with periodic boundary conditions, where each lattice site can either be vacant or occupied by a single individual belonging to one of the five species. The initial configuration is generated randomly by assigning each site to one of the five species or to a vacant state according to the densities which ensure that the normalization condition,
$\rho_a+\rho_b+\rho_c+\rho_d+\rho_e+\rho_v=1$ is satisfied. Once the condition is imposed in the initial configuration, it is automatically conserved in all the subsequent Monte Carlo steps.

The dynamics evolve through asynchronous random sequential updates. During each elementary update, a lattice site is selected uniformly at random. If the selected site is occupied, one of its four nearest neighbours in the von Neumann neighbourhood is chosen randomly, and the interaction between the two sites is executed according to the stochastic reaction rules of the model. Depending on the interacting species, the selected individual may eliminate its prey through predation, reproduce into a neighbouring vacant site,  or die spontaneously, with probabilities proportional to the corresponding reaction rates. Each successful event modifies the local lattice configuration while preserving the single-occupancy constraint. One time unit is defined by $L^2$ Monte Carlo steps of elementary update attempts so that, on average, every lattice site is selected once per unit time.

 Ensemble averages are obtained by performing simulations on a large number of independent realizations with different random initial configurations and random number sequences. Species densities are measured and averaged over the steady-state regime. The steady state is obtained after discarding an initial transient eliminating finite-time relaxation effects. The stochastic simulations thus capture the effects of spatial correlations, fluctuations, and finite-size noise that are absent in the mean-field description.

}

\section{Author Contribution}
\noindent
{\bf Rakesh Samanta}: Data curation; Formal analysis; Investigation; Validation; Writing-original draft; Writing-review \& editing. {\bf Shraosi Dawn}: Data curation; Formal analysis; Investigation; Validation; Writing-original draft; Writing-review \& editing. {\bf Sk Jahiruddin}: Data curation; Formal analysis; Investigation; Validation. {\bf Sirshendu Bhattacharyya}: Conceptualization; Formal analysis; Investigation; Methodology; Project administration; Supervision; Validation; Visualization; Writing-original draft; Writing-review \& editing. {\bf Chittaranjan Hens}: Conceptualization; Formal analysis; Investigation; Methodology; Project administration; Supervision; Validation; Visualization; Writing-original draft; Writing-review \& editing. {\bf Sayantan Nag Chowdhury}: Conceptualization; Formal analysis; Investigation; Methodology; Project administration; Supervision; Validation; Visualization; Writing-original draft; Writing-review \& editing.

\bibliography{rps_death}

\begin{thebibliography}{55}%
\makeatletter
\providecommand \@ifxundefined [1]{%
 \@ifx{#1\undefined}
}%
\providecommand \@ifnum [1]{%
 \ifnum #1\expandafter \@firstoftwo
 \else \expandafter \@secondoftwo
 \fi
}%
\providecommand \@ifx [1]{%
 \ifx #1\expandafter \@firstoftwo
 \else \expandafter \@secondoftwo
 \fi
}%
\providecommand \natexlab [1]{#1}%
\providecommand \enquote  [1]{``#1''}%
\providecommand \bibnamefont  [1]{#1}%
\providecommand \bibfnamefont [1]{#1}%
\providecommand \citenamefont [1]{#1}%
\providecommand \href@noop [0]{\@secondoftwo}%
\providecommand \href [0]{\begingroup \@sanitize@url \@href}%
\providecommand \@href[1]{\@@startlink{#1}\@@href}%
\providecommand \@@href[1]{\endgroup#1\@@endlink}%
\providecommand \@sanitize@url [0]{\catcode `\\12\catcode `\$12\catcode
  `\&12\catcode `\#12\catcode `\^12\catcode `\_12\catcode `\%12\relax}%
\providecommand \@@startlink[1]{}%
\providecommand \@@endlink[0]{}%
\providecommand \url  [0]{\begingroup\@sanitize@url \@url }%
\providecommand \@url [1]{\endgroup\@href {#1}{\urlprefix }}%
\providecommand \urlprefix  [0]{URL }%
\providecommand \Eprint [0]{\href }%
\providecommand \doibase [0]{http://dx.doi.org/}%
\providecommand \selectlanguage [0]{\@gobble}%
\providecommand \bibinfo  [0]{\@secondoftwo}%
\providecommand \bibfield  [0]{\@secondoftwo}%
\providecommand \translation [1]{[#1]}%
\providecommand \BibitemOpen [0]{}%
\providecommand \bibitemStop [0]{}%
\providecommand \bibitemNoStop [0]{.\EOS\space}%
\providecommand \EOS [0]{\spacefactor3000\relax}%
\providecommand \BibitemShut  [1]{\csname bibitem#1\endcsname}%
\let\auto@bib@innerbib\@empty
\bibitem [{\citenamefont {Lotka}(1920)}]{lotka1920analytical}%
  \BibitemOpen
  \bibfield  {author} {\bibinfo {author} {\bibfnamefont {A.~J.}\ \bibnamefont
  {Lotka}},\ }\href@noop {} {\bibfield  {journal} {\bibinfo  {journal}
  {Proceedings of the National Academy of Sciences}\ }\textbf {\bibinfo
  {volume} {6}},\ \bibinfo {pages} {410} (\bibinfo {year} {1920})}\BibitemShut
  {NoStop}%
\bibitem [{\citenamefont {Volterra}(1927)}]{volterra1927fluctuations}%
  \BibitemOpen
  \bibfield  {author} {\bibinfo {author} {\bibfnamefont {V.}~\bibnamefont
  {Volterra}},\ }\href@noop {} {\bibfield  {journal} {\bibinfo  {journal}
  {Nature}\ }\textbf {\bibinfo {volume} {119}},\ \bibinfo {pages} {12}
  (\bibinfo {year} {1927})}\BibitemShut {NoStop}%
\bibitem [{\citenamefont {May}\ and\ \citenamefont
  {Leonard}(1975)}]{may1975nonlinear}%
  \BibitemOpen
  \bibfield  {author} {\bibinfo {author} {\bibfnamefont {R.~M.}\ \bibnamefont
  {May}}\ and\ \bibinfo {author} {\bibfnamefont {W.~J.}\ \bibnamefont
  {Leonard}},\ }\href@noop {} {\bibfield  {journal} {\bibinfo  {journal} {SIAM
  Journal on Applied Mathematics}\ }\textbf {\bibinfo {volume} {29}},\ \bibinfo
  {pages} {243} (\bibinfo {year} {1975})}\BibitemShut {NoStop}%
\bibitem [{\citenamefont {He}\ \emph {et~al.}(2010)\citenamefont {He},
  \citenamefont {Mobilia},\ and\ \citenamefont {T{\"a}uber}}]{he2010spatial}%
  \BibitemOpen
  \bibfield  {author} {\bibinfo {author} {\bibfnamefont {Q.}~\bibnamefont
  {He}}, \bibinfo {author} {\bibfnamefont {M.}~\bibnamefont {Mobilia}}, \ and\
  \bibinfo {author} {\bibfnamefont {U.~C.}\ \bibnamefont {T{\"a}uber}},\
  }\href@noop {} {\bibfield  {journal} {\bibinfo  {journal} {Physical Review
  E}\ }\textbf {\bibinfo {volume} {82}},\ \bibinfo {pages} {051909} (\bibinfo
  {year} {2010})}\BibitemShut {NoStop}%
\bibitem [{\citenamefont {Jiang}\ \emph {et~al.}(2012)\citenamefont {Jiang},
  \citenamefont {Wang}, \citenamefont {Lai},\ and\ \citenamefont
  {Ni}}]{jiang2012multi}%
  \BibitemOpen
  \bibfield  {author} {\bibinfo {author} {\bibfnamefont {L.-L.}\ \bibnamefont
  {Jiang}}, \bibinfo {author} {\bibfnamefont {W.-X.}\ \bibnamefont {Wang}},
  \bibinfo {author} {\bibfnamefont {Y.-C.}\ \bibnamefont {Lai}}, \ and\
  \bibinfo {author} {\bibfnamefont {X.}~\bibnamefont {Ni}},\ }\href@noop {}
  {\bibfield  {journal} {\bibinfo  {journal} {Physics Letters A}\ }\textbf
  {\bibinfo {volume} {376}},\ \bibinfo {pages} {2292} (\bibinfo {year}
  {2012})}\BibitemShut {NoStop}%
\bibitem [{\citenamefont {Huang}\ \emph {et~al.}(2023)\citenamefont {Huang},
  \citenamefont {Duan}, \citenamefont {Qin},\ and\ \citenamefont
  {Park}}]{huang2023fitness}%
  \BibitemOpen
  \bibfield  {author} {\bibinfo {author} {\bibfnamefont {W.}~\bibnamefont
  {Huang}}, \bibinfo {author} {\bibfnamefont {X.}~\bibnamefont {Duan}},
  \bibinfo {author} {\bibfnamefont {L.}~\bibnamefont {Qin}}, \ and\ \bibinfo
  {author} {\bibfnamefont {J.}~\bibnamefont {Park}},\ }\href@noop {} {\bibfield
   {journal} {\bibinfo  {journal} {Applied Mathematics and Computation}\
  }\textbf {\bibinfo {volume} {456}},\ \bibinfo {pages} {128135} (\bibinfo
  {year} {2023})}\BibitemShut {NoStop}%
\bibitem [{\citenamefont {Menezes}\ and\ \citenamefont
  {Tenorio}(2023)}]{menezes2023spatial}%
  \BibitemOpen
  \bibfield  {author} {\bibinfo {author} {\bibfnamefont {J.}~\bibnamefont
  {Menezes}}\ and\ \bibinfo {author} {\bibfnamefont {M.}~\bibnamefont
  {Tenorio}},\ }\href@noop {} {\bibfield  {journal} {\bibinfo  {journal}
  {Journal of Physics: Complexity}\ }\textbf {\bibinfo {volume} {4}},\ \bibinfo
  {pages} {025015} (\bibinfo {year} {2023})}\BibitemShut {NoStop}%
\bibitem [{\citenamefont {Cheng}\ \emph {et~al.}(2014)\citenamefont {Cheng},
  \citenamefont {Yao}, \citenamefont {Huang}, \citenamefont {Park},
  \citenamefont {Do},\ and\ \citenamefont {Lai}}]{cheng2014mesoscopic}%
  \BibitemOpen
  \bibfield  {author} {\bibinfo {author} {\bibfnamefont {H.}~\bibnamefont
  {Cheng}}, \bibinfo {author} {\bibfnamefont {N.}~\bibnamefont {Yao}}, \bibinfo
  {author} {\bibfnamefont {Z.-G.}\ \bibnamefont {Huang}}, \bibinfo {author}
  {\bibfnamefont {J.}~\bibnamefont {Park}}, \bibinfo {author} {\bibfnamefont
  {Y.}~\bibnamefont {Do}}, \ and\ \bibinfo {author} {\bibfnamefont {Y.-C.}\
  \bibnamefont {Lai}},\ }\href@noop {} {\bibfield  {journal} {\bibinfo
  {journal} {Scientific Reports}\ }\textbf {\bibinfo {volume} {4}},\ \bibinfo
  {pages} {1} (\bibinfo {year} {2014})}\BibitemShut {NoStop}%
\bibitem [{\citenamefont {Park}\ \emph {et~al.}(2017)\citenamefont {Park},
  \citenamefont {Do}, \citenamefont {Jang},\ and\ \citenamefont
  {Lai}}]{park2017emergence}%
  \BibitemOpen
  \bibfield  {author} {\bibinfo {author} {\bibfnamefont {J.}~\bibnamefont
  {Park}}, \bibinfo {author} {\bibfnamefont {Y.}~\bibnamefont {Do}}, \bibinfo
  {author} {\bibfnamefont {B.}~\bibnamefont {Jang}}, \ and\ \bibinfo {author}
  {\bibfnamefont {Y.-C.}\ \bibnamefont {Lai}},\ }\href@noop {} {\bibfield
  {journal} {\bibinfo  {journal} {Scientific Reports}\ }\textbf {\bibinfo
  {volume} {7}},\ \bibinfo {pages} {1} (\bibinfo {year} {2017})}\BibitemShut
  {NoStop}%
\bibitem [{\citenamefont {Wang}\ \emph {et~al.}(2022)\citenamefont {Wang},
  \citenamefont {Lu}, \citenamefont {Shi},\ and\ \citenamefont
  {Park}}]{wang2022effect}%
  \BibitemOpen
  \bibfield  {author} {\bibinfo {author} {\bibfnamefont {X.}~\bibnamefont
  {Wang}}, \bibinfo {author} {\bibfnamefont {Y.}~\bibnamefont {Lu}}, \bibinfo
  {author} {\bibfnamefont {L.}~\bibnamefont {Shi}}, \ and\ \bibinfo {author}
  {\bibfnamefont {J.}~\bibnamefont {Park}},\ }\href@noop {} {\bibfield
  {journal} {\bibinfo  {journal} {Scientific Reports}\ }\textbf {\bibinfo
  {volume} {12}},\ \bibinfo {pages} {1821} (\bibinfo {year}
  {2022})}\BibitemShut {NoStop}%
\bibitem [{\citenamefont {Avelino}\ \emph {et~al.}(2012)\citenamefont
  {Avelino}, \citenamefont {Bazeia}, \citenamefont {Losano},\ and\
  \citenamefont {Menezes}}]{avelino2012neummann}%
  \BibitemOpen
  \bibfield  {author} {\bibinfo {author} {\bibfnamefont {P.}~\bibnamefont
  {Avelino}}, \bibinfo {author} {\bibfnamefont {D.}~\bibnamefont {Bazeia}},
  \bibinfo {author} {\bibfnamefont {L.}~\bibnamefont {Losano}}, \ and\ \bibinfo
  {author} {\bibfnamefont {J.}~\bibnamefont {Menezes}},\ }\href@noop {}
  {\bibfield  {journal} {\bibinfo  {journal} {Physical Review E—Statistical,
  Nonlinear, and Soft Matter Physics}\ }\textbf {\bibinfo {volume} {86}},\
  \bibinfo {pages} {031119} (\bibinfo {year} {2012})}\BibitemShut {NoStop}%
\bibitem [{\citenamefont {Mohd}\ and\ \citenamefont
  {Park}(2021)}]{mohd2021interplay}%
  \BibitemOpen
  \bibfield  {author} {\bibinfo {author} {\bibfnamefont {M.~H.}\ \bibnamefont
  {Mohd}}\ and\ \bibinfo {author} {\bibfnamefont {J.}~\bibnamefont {Park}},\
  }\href@noop {} {\bibfield  {journal} {\bibinfo  {journal} {Chaos, Solitons \&
  Fractals}\ }\textbf {\bibinfo {volume} {153}},\ \bibinfo {pages} {111579}
  (\bibinfo {year} {2021})}\BibitemShut {NoStop}%
\bibitem [{\citenamefont {Wangersky}(1978)}]{wangersky1978lotka}%
  \BibitemOpen
  \bibfield  {author} {\bibinfo {author} {\bibfnamefont {P.~J.}\ \bibnamefont
  {Wangersky}},\ }\href@noop {} {\bibfield  {journal} {\bibinfo  {journal}
  {Annual Review of Ecology and Systematics}\ }\textbf {\bibinfo {volume}
  {9}},\ \bibinfo {pages} {189} (\bibinfo {year} {1978})}\BibitemShut {NoStop}%
\bibitem [{\citenamefont {Anisiu}(2014)}]{anisiu2014lotka}%
  \BibitemOpen
  \bibfield  {author} {\bibinfo {author} {\bibfnamefont {M.-C.}\ \bibnamefont
  {Anisiu}},\ }\href@noop {} {\bibfield  {journal} {\bibinfo  {journal}
  {Did{\'a}ctica Mathematica}\ }\textbf {\bibinfo {volume} {32}} (\bibinfo
  {year} {2014})}\BibitemShut {NoStop}%
\bibitem [{\citenamefont {Bunin}(2017)}]{bunin2017ecological}%
  \BibitemOpen
  \bibfield  {author} {\bibinfo {author} {\bibfnamefont {G.}~\bibnamefont
  {Bunin}},\ }\href@noop {} {\bibfield  {journal} {\bibinfo  {journal}
  {Physical Review E}\ }\textbf {\bibinfo {volume} {95}},\ \bibinfo {pages}
  {042414} (\bibinfo {year} {2017})}\BibitemShut {NoStop}%
\bibitem [{\citenamefont {Din}(2013)}]{din2013dynamics}%
  \BibitemOpen
  \bibfield  {author} {\bibinfo {author} {\bibfnamefont {Q.}~\bibnamefont
  {Din}},\ }\href@noop {} {\bibfield  {journal} {\bibinfo  {journal} {Advances
  in Difference Equations}\ }\textbf {\bibinfo {volume} {2013}},\ \bibinfo
  {pages} {95} (\bibinfo {year} {2013})}\BibitemShut {NoStop}%
\bibitem [{\citenamefont {Takeuchi}(1996)}]{takeuchi1996global}%
  \BibitemOpen
  \bibfield  {author} {\bibinfo {author} {\bibfnamefont {Y.}~\bibnamefont
  {Takeuchi}},\ }\href@noop {} {\emph {\bibinfo {title} {Global dynamical
  properties of Lotka-Volterra systems}}}\ (\bibinfo  {publisher} {World
  Scientific},\ \bibinfo {year} {1996})\BibitemShut {NoStop}%
\bibitem [{\citenamefont {Hern{\'a}ndez-Bermejo}\ and\ \citenamefont
  {Fair{\'e}n}(1997)}]{hernandez1997lotka}%
  \BibitemOpen
  \bibfield  {author} {\bibinfo {author} {\bibfnamefont {B.}~\bibnamefont
  {Hern{\'a}ndez-Bermejo}}\ and\ \bibinfo {author} {\bibfnamefont
  {V.}~\bibnamefont {Fair{\'e}n}},\ }\href@noop {} {\bibfield  {journal}
  {\bibinfo  {journal} {Mathematical Biosciences}\ }\textbf {\bibinfo {volume}
  {140}},\ \bibinfo {pages} {1} (\bibinfo {year} {1997})}\BibitemShut {NoStop}%
\bibitem [{\citenamefont {Serrao}\ and\ \citenamefont
  {T{\"a}uber}(2017)}]{serrao2017stochastic}%
  \BibitemOpen
  \bibfield  {author} {\bibinfo {author} {\bibfnamefont {S.~R.}\ \bibnamefont
  {Serrao}}\ and\ \bibinfo {author} {\bibfnamefont {U.~C.}\ \bibnamefont
  {T{\"a}uber}},\ }\href@noop {} {\bibfield  {journal} {\bibinfo  {journal}
  {Journal of Physics A: Mathematical and Theoretical}\ }\textbf {\bibinfo
  {volume} {50}},\ \bibinfo {pages} {404005} (\bibinfo {year}
  {2017})}\BibitemShut {NoStop}%
\bibitem [{\citenamefont {Chi}\ \emph {et~al.}(1998)\citenamefont {Chi},
  \citenamefont {Wu},\ and\ \citenamefont {Hsu}}]{chi1998asymmetric}%
  \BibitemOpen
  \bibfield  {author} {\bibinfo {author} {\bibfnamefont {C.-W.}\ \bibnamefont
  {Chi}}, \bibinfo {author} {\bibfnamefont {L.-I.}\ \bibnamefont {Wu}}, \ and\
  \bibinfo {author} {\bibfnamefont {S.-B.}\ \bibnamefont {Hsu}},\ }\href@noop
  {} {\bibfield  {journal} {\bibinfo  {journal} {SIAM Journal on Applied
  Mathematics}\ }\textbf {\bibinfo {volume} {58}},\ \bibinfo {pages} {211}
  (\bibinfo {year} {1998})}\BibitemShut {NoStop}%
\bibitem [{\citenamefont {Barendregt}\ and\ \citenamefont
  {Thomas}(2023)}]{barendregt2023heteroclinic}%
  \BibitemOpen
  \bibfield  {author} {\bibinfo {author} {\bibfnamefont {N.~W.}\ \bibnamefont
  {Barendregt}}\ and\ \bibinfo {author} {\bibfnamefont {P.~J.}\ \bibnamefont
  {Thomas}},\ }\href@noop {} {\bibfield  {journal} {\bibinfo  {journal}
  {Journal of Mathematical Biology}\ }\textbf {\bibinfo {volume} {86}},\
  \bibinfo {pages} {30} (\bibinfo {year} {2023})}\BibitemShut {NoStop}%
\bibitem [{\citenamefont {Leonard}\ and\ \citenamefont
  {May}(1975)}]{leonard1975nonlinear}%
  \BibitemOpen
  \bibfield  {author} {\bibinfo {author} {\bibfnamefont {W.}~\bibnamefont
  {Leonard}}\ and\ \bibinfo {author} {\bibfnamefont {R.}~\bibnamefont {May}},\
  }\href@noop {} {\bibfield  {journal} {\bibinfo  {journal} {SIAM J. Appl.
  Math}\ }\textbf {\bibinfo {volume} {29}},\ \bibinfo {pages} {243} (\bibinfo
  {year} {1975})}\BibitemShut {NoStop}%
\bibitem [{\citenamefont {Sinervo}\ and\ \citenamefont
  {Lively}(1996)}]{sinervo1996rock}%
  \BibitemOpen
  \bibfield  {author} {\bibinfo {author} {\bibfnamefont {B.}~\bibnamefont
  {Sinervo}}\ and\ \bibinfo {author} {\bibfnamefont {C.~M.}\ \bibnamefont
  {Lively}},\ }\href@noop {} {\bibfield  {journal} {\bibinfo  {journal}
  {Nature}\ }\textbf {\bibinfo {volume} {380}},\ \bibinfo {pages} {240}
  (\bibinfo {year} {1996})}\BibitemShut {NoStop}%
\bibitem [{\citenamefont {Szab{\'o}}\ and\ \citenamefont
  {Fath}(2007)}]{szabo2007evolutionary}%
  \BibitemOpen
  \bibfield  {author} {\bibinfo {author} {\bibfnamefont {G.}~\bibnamefont
  {Szab{\'o}}}\ and\ \bibinfo {author} {\bibfnamefont {G.}~\bibnamefont
  {Fath}},\ }\href@noop {} {\bibfield  {journal} {\bibinfo  {journal} {Physics
  Reports}\ }\textbf {\bibinfo {volume} {446}},\ \bibinfo {pages} {97}
  (\bibinfo {year} {2007})}\BibitemShut {NoStop}%
\bibitem [{\citenamefont {Wu}\ \emph {et~al.}(2010)\citenamefont {Wu},
  \citenamefont {Altrock}, \citenamefont {Wang},\ and\ \citenamefont
  {Traulsen}}]{wu2010}%
  \BibitemOpen
  \bibfield  {author} {\bibinfo {author} {\bibfnamefont {B.}~\bibnamefont
  {Wu}}, \bibinfo {author} {\bibfnamefont {P.~M.}\ \bibnamefont {Altrock}},
  \bibinfo {author} {\bibfnamefont {L.}~\bibnamefont {Wang}}, \ and\ \bibinfo
  {author} {\bibfnamefont {A.}~\bibnamefont {Traulsen}},\ }\href {\doibase
  10.1103/PhysRevE.82.046106} {\bibfield  {journal} {\bibinfo  {journal} {Phys.
  Rev. E}\ }\textbf {\bibinfo {volume} {82}},\ \bibinfo {pages} {046106}
  (\bibinfo {year} {2010})}\BibitemShut {NoStop}%
\bibitem [{\citenamefont {Claussen}\ and\ \citenamefont
  {Traulsen}(2008)}]{claussen2008cyclic}%
  \BibitemOpen
  \bibfield  {author} {\bibinfo {author} {\bibfnamefont {J.~C.}\ \bibnamefont
  {Claussen}}\ and\ \bibinfo {author} {\bibfnamefont {A.}~\bibnamefont
  {Traulsen}},\ }\href@noop {} {\bibfield  {journal} {\bibinfo  {journal}
  {Physical Review Letters}\ }\textbf {\bibinfo {volume} {100}},\ \bibinfo
  {pages} {058104} (\bibinfo {year} {2008})}\BibitemShut {NoStop}%
\bibitem [{\citenamefont {Kerr}\ \emph {et~al.}(2002)\citenamefont {Kerr},
  \citenamefont {Riley}, \citenamefont {Feldman},\ and\ \citenamefont
  {Bohannan}}]{kerr2002local}%
  \BibitemOpen
  \bibfield  {author} {\bibinfo {author} {\bibfnamefont {B.}~\bibnamefont
  {Kerr}}, \bibinfo {author} {\bibfnamefont {M.~A.}\ \bibnamefont {Riley}},
  \bibinfo {author} {\bibfnamefont {M.~W.}\ \bibnamefont {Feldman}}, \ and\
  \bibinfo {author} {\bibfnamefont {B.~J.}\ \bibnamefont {Bohannan}},\
  }\href@noop {} {\bibfield  {journal} {\bibinfo  {journal} {Nature}\ }\textbf
  {\bibinfo {volume} {418}},\ \bibinfo {pages} {171} (\bibinfo {year}
  {2002})}\BibitemShut {NoStop}%
\bibitem [{\citenamefont {Reichenbach}\ \emph {et~al.}(2007)\citenamefont
  {Reichenbach}, \citenamefont {Mobilia},\ and\ \citenamefont
  {Frey}}]{reichenbach2007mobility}%
  \BibitemOpen
  \bibfield  {author} {\bibinfo {author} {\bibfnamefont {T.}~\bibnamefont
  {Reichenbach}}, \bibinfo {author} {\bibfnamefont {M.}~\bibnamefont
  {Mobilia}}, \ and\ \bibinfo {author} {\bibfnamefont {E.}~\bibnamefont
  {Frey}},\ }\href@noop {} {\bibfield  {journal} {\bibinfo  {journal} {Nature}\
  }\textbf {\bibinfo {volume} {448}},\ \bibinfo {pages} {1046} (\bibinfo {year}
  {2007})}\BibitemShut {NoStop}%
\bibitem [{\citenamefont {Reichenbach}\ and\ \citenamefont
  {Frey}(2008)}]{reichenbach2008instability}%
  \BibitemOpen
  \bibfield  {author} {\bibinfo {author} {\bibfnamefont {T.}~\bibnamefont
  {Reichenbach}}\ and\ \bibinfo {author} {\bibfnamefont {E.}~\bibnamefont
  {Frey}},\ }\href@noop {} {\bibfield  {journal} {\bibinfo  {journal} {Physical
  Review Letters}\ }\textbf {\bibinfo {volume} {101}},\ \bibinfo {pages}
  {058102} (\bibinfo {year} {2008})}\BibitemShut {NoStop}%
\bibitem [{\citenamefont {Mobilia}(2010)}]{mobiia2010jtb}%
  \BibitemOpen
  \bibfield  {author} {\bibinfo {author} {\bibfnamefont {M.}~\bibnamefont
  {Mobilia}},\ }\href {\doibase https://doi.org/10.1016/j.jtbi.2010.01.008}
  {\bibfield  {journal} {\bibinfo  {journal} {Journal of Theoretical Biology}\
  }\textbf {\bibinfo {volume} {264}},\ \bibinfo {pages} {1} (\bibinfo {year}
  {2010})}\BibitemShut {NoStop}%
\bibitem [{\citenamefont {Bhattacharyya}\ \emph {et~al.}(2020)\citenamefont
  {Bhattacharyya}, \citenamefont {Sinha}, \citenamefont {De},\ and\
  \citenamefont {Hens}}]{bhattacharyya2020pre}%
  \BibitemOpen
  \bibfield  {author} {\bibinfo {author} {\bibfnamefont {S.}~\bibnamefont
  {Bhattacharyya}}, \bibinfo {author} {\bibfnamefont {P.}~\bibnamefont
  {Sinha}}, \bibinfo {author} {\bibfnamefont {R.}~\bibnamefont {De}}, \ and\
  \bibinfo {author} {\bibfnamefont {C.}~\bibnamefont {Hens}},\ }\href {\doibase
  10.1103/PhysRevE.102.012220} {\bibfield  {journal} {\bibinfo  {journal}
  {Phys. Rev. E}\ }\textbf {\bibinfo {volume} {102}},\ \bibinfo {pages}
  {012220} (\bibinfo {year} {2020})}\BibitemShut {NoStop}%
\bibitem [{\citenamefont {Islam}\ \emph {et~al.}(2022)\citenamefont {Islam},
  \citenamefont {Mondal}, \citenamefont {Mobilia}, \citenamefont
  {Bhattacharyya},\ and\ \citenamefont {Hens}}]{islam2022pre}%
  \BibitemOpen
  \bibfield  {author} {\bibinfo {author} {\bibfnamefont {S.}~\bibnamefont
  {Islam}}, \bibinfo {author} {\bibfnamefont {A.}~\bibnamefont {Mondal}},
  \bibinfo {author} {\bibfnamefont {M.}~\bibnamefont {Mobilia}}, \bibinfo
  {author} {\bibfnamefont {S.}~\bibnamefont {Bhattacharyya}}, \ and\ \bibinfo
  {author} {\bibfnamefont {C.}~\bibnamefont {Hens}},\ }\href {\doibase
  10.1103/PhysRevE.105.014215} {\bibfield  {journal} {\bibinfo  {journal}
  {Phys. Rev. E}\ }\textbf {\bibinfo {volume} {105}},\ \bibinfo {pages}
  {014215} (\bibinfo {year} {2022})}\BibitemShut {NoStop}%
\bibitem [{\citenamefont {Juul}\ \emph {et~al.}(2012)\citenamefont {Juul},
  \citenamefont {Sneppen},\ and\ \citenamefont {Mathiesen}}]{juul2012pre}%
  \BibitemOpen
  \bibfield  {author} {\bibinfo {author} {\bibfnamefont {J.}~\bibnamefont
  {Juul}}, \bibinfo {author} {\bibfnamefont {K.}~\bibnamefont {Sneppen}}, \
  and\ \bibinfo {author} {\bibfnamefont {J.}~\bibnamefont {Mathiesen}},\ }\href
  {\doibase 10.1103/PhysRevE.85.061924} {\bibfield  {journal} {\bibinfo
  {journal} {Phys. Rev. E}\ }\textbf {\bibinfo {volume} {85}},\ \bibinfo
  {pages} {061924} (\bibinfo {year} {2012})}\BibitemShut {NoStop}%
\bibitem [{\citenamefont {Szolnoki}\ and\ \citenamefont
  {Perc}(2009)}]{szolnoki2009njp}%
  \BibitemOpen
  \bibfield  {author} {\bibinfo {author} {\bibfnamefont {A.}~\bibnamefont
  {Szolnoki}}\ and\ \bibinfo {author} {\bibfnamefont {M.}~\bibnamefont
  {Perc}},\ }\href {\doibase 10.1088/1367-2630/11/9/093033} {\bibfield
  {journal} {\bibinfo  {journal} {New Journal of Physics}\ }\textbf {\bibinfo
  {volume} {11}},\ \bibinfo {pages} {093033} (\bibinfo {year}
  {2009})}\BibitemShut {NoStop}%
\bibitem [{\citenamefont {Serrao}\ and\ \citenamefont
  {T{\"a}uber}(2021)}]{serrao2021stabilizing}%
  \BibitemOpen
  \bibfield  {author} {\bibinfo {author} {\bibfnamefont {S.~R.}\ \bibnamefont
  {Serrao}}\ and\ \bibinfo {author} {\bibfnamefont {U.~C.}\ \bibnamefont
  {T{\"a}uber}},\ }\href@noop {} {\bibfield  {journal} {\bibinfo  {journal}
  {The European Physical Journal B}\ }\textbf {\bibinfo {volume} {94}},\
  \bibinfo {pages} {175} (\bibinfo {year} {2021})}\BibitemShut {NoStop}%
\bibitem [{\citenamefont {Baert}\ \emph {et~al.}()\citenamefont {Baert},
  \citenamefont {De~Laender}, \citenamefont {Sabbe},\ and\ \citenamefont
  {Janssen}}]{https://doi.org/10.1002/ecy.1601}%
  \BibitemOpen
  \bibfield  {author} {\bibinfo {author} {\bibfnamefont {J.~M.}\ \bibnamefont
  {Baert}}, \bibinfo {author} {\bibfnamefont {F.}~\bibnamefont {De~Laender}},
  \bibinfo {author} {\bibfnamefont {K.}~\bibnamefont {Sabbe}}, \ and\ \bibinfo
  {author} {\bibfnamefont {C.~R.}\ \bibnamefont {Janssen}},\ }\href {\doibase
  https://doi.org/10.1002/ecy.1601} {\bibfield  {journal} {\bibinfo  {journal}
  {Ecology}\ }\textbf {\bibinfo {volume} {97}},\ \bibinfo {pages}
  {3433}}\BibitemShut {NoStop}%
\bibitem [{\citenamefont {Hillebrand}\ and\ \citenamefont
  {Kunze}(2020)}]{https://doi.org/10.1111/ele.13457}%
  \BibitemOpen
  \bibfield  {author} {\bibinfo {author} {\bibfnamefont {H.}~\bibnamefont
  {Hillebrand}}\ and\ \bibinfo {author} {\bibfnamefont {C.}~\bibnamefont
  {Kunze}},\ }\href@noop {} {\bibfield  {journal} {\bibinfo  {journal} {Ecology
  Letters}\ }\textbf {\bibinfo {volume} {23}},\ \bibinfo {pages} {575}
  (\bibinfo {year} {2020})}\BibitemShut {NoStop}%
\bibitem [{\citenamefont {Baert}\ \emph {et~al.}(2016)\citenamefont {Baert},
  \citenamefont {Janssen}, \citenamefont {Sabbe},\ and\ \citenamefont
  {De~Laender}}]{baert2016per}%
  \BibitemOpen
  \bibfield  {author} {\bibinfo {author} {\bibfnamefont {J.~M.}\ \bibnamefont
  {Baert}}, \bibinfo {author} {\bibfnamefont {C.~R.}\ \bibnamefont {Janssen}},
  \bibinfo {author} {\bibfnamefont {K.}~\bibnamefont {Sabbe}}, \ and\ \bibinfo
  {author} {\bibfnamefont {F.}~\bibnamefont {De~Laender}},\ }\href@noop {}
  {\bibfield  {journal} {\bibinfo  {journal} {Nature Communications}\ }\textbf
  {\bibinfo {volume} {7}},\ \bibinfo {pages} {12486} (\bibinfo {year}
  {2016})}\BibitemShut {NoStop}%
\bibitem [{\citenamefont {De~Laender}\ \emph {et~al.}(2016)\citenamefont
  {De~Laender}, \citenamefont {Rohr}, \citenamefont {Ashauer}, \citenamefont
  {Baird}, \citenamefont {Berger}, \citenamefont {Eisenhauer}, \citenamefont
  {Grimm}, \citenamefont {Hommen}, \citenamefont {Maltby}, \citenamefont
  {Meli{\`a}n} \emph {et~al.}}]{de2016reintroducing}%
  \BibitemOpen
  \bibfield  {author} {\bibinfo {author} {\bibfnamefont {F.}~\bibnamefont
  {De~Laender}}, \bibinfo {author} {\bibfnamefont {J.~R.}\ \bibnamefont
  {Rohr}}, \bibinfo {author} {\bibfnamefont {R.}~\bibnamefont {Ashauer}},
  \bibinfo {author} {\bibfnamefont {D.~J.}\ \bibnamefont {Baird}}, \bibinfo
  {author} {\bibfnamefont {U.}~\bibnamefont {Berger}}, \bibinfo {author}
  {\bibfnamefont {N.}~\bibnamefont {Eisenhauer}}, \bibinfo {author}
  {\bibfnamefont {V.}~\bibnamefont {Grimm}}, \bibinfo {author} {\bibfnamefont
  {U.}~\bibnamefont {Hommen}}, \bibinfo {author} {\bibfnamefont
  {L.}~\bibnamefont {Maltby}}, \bibinfo {author} {\bibfnamefont {C.~J.}\
  \bibnamefont {Meli{\`a}n}},  \emph {et~al.},\ }\href@noop {} {\bibfield
  {journal} {\bibinfo  {journal} {Trends in Ecology \& Evolution}\ }\textbf
  {\bibinfo {volume} {31}},\ \bibinfo {pages} {905} (\bibinfo {year}
  {2016})}\BibitemShut {NoStop}%
\bibitem [{\citenamefont {Chatterjee}\ \emph {et~al.}(2023)\citenamefont
  {Chatterjee}, \citenamefont {De}, \citenamefont {Hens}, \citenamefont {Dana},
  \citenamefont {Kapitaniak},\ and\ \citenamefont
  {Bhattacharyya}}]{chatterjee2023response}%
  \BibitemOpen
  \bibfield  {author} {\bibinfo {author} {\bibfnamefont {S.}~\bibnamefont
  {Chatterjee}}, \bibinfo {author} {\bibfnamefont {R.}~\bibnamefont {De}},
  \bibinfo {author} {\bibfnamefont {C.}~\bibnamefont {Hens}}, \bibinfo {author}
  {\bibfnamefont {S.~K.}\ \bibnamefont {Dana}}, \bibinfo {author}
  {\bibfnamefont {T.}~\bibnamefont {Kapitaniak}}, \ and\ \bibinfo {author}
  {\bibfnamefont {S.}~\bibnamefont {Bhattacharyya}},\ }\href@noop {} {\bibfield
   {journal} {\bibinfo  {journal} {Scientific Reports}\ }\textbf {\bibinfo
  {volume} {13}},\ \bibinfo {pages} {20740} (\bibinfo {year}
  {2023})}\BibitemShut {NoStop}%
\bibitem [{\citenamefont {Chen}\ \emph {et~al.}(2024)\citenamefont {Chen},
  \citenamefont {Wang},\ and\ \citenamefont {Liu}}]{chen2024stability}%
  \BibitemOpen
  \bibfield  {author} {\bibinfo {author} {\bibfnamefont {C.}~\bibnamefont
  {Chen}}, \bibinfo {author} {\bibfnamefont {X.-W.}\ \bibnamefont {Wang}}, \
  and\ \bibinfo {author} {\bibfnamefont {Y.-Y.}\ \bibnamefont {Liu}},\
  }\href@noop {} {\bibfield  {journal} {\bibinfo  {journal} {Physics reports}\
  }\textbf {\bibinfo {volume} {1088}},\ \bibinfo {pages} {1} (\bibinfo {year}
  {2024})}\BibitemShut {NoStop}%
\bibitem [{\citenamefont {Dai}\ \emph {et~al.}(2025)\citenamefont {Dai},
  \citenamefont {Wang}, \citenamefont {Dai},\ and\ \citenamefont
  {Shi}}]{dai2025chaos}%
  \BibitemOpen
  \bibfield  {author} {\bibinfo {author} {\bibfnamefont {H.}~\bibnamefont
  {Dai}}, \bibinfo {author} {\bibfnamefont {X.}~\bibnamefont {Wang}}, \bibinfo
  {author} {\bibfnamefont {X.}~\bibnamefont {Dai}}, \ and\ \bibinfo {author}
  {\bibfnamefont {L.}~\bibnamefont {Shi}},\ }\href@noop {} {\bibfield
  {journal} {\bibinfo  {journal} {Chaos: An Interdisciplinary Journal of
  Nonlinear Science}\ }\textbf {\bibinfo {volume} {35}},\ \bibinfo {pages}
  {093147} (\bibinfo {year} {2025})}\BibitemShut {NoStop}%
\bibitem [{\citenamefont {Reuter}\ \emph {et~al.}(2010)\citenamefont {Reuter},
  \citenamefont {Jopp}, \citenamefont {Blanco-Moreno}, \citenamefont
  {Damgaard}, \citenamefont {Matsinos},\ and\ \citenamefont
  {DeAngelis}}]{reuter2010ecological}%
  \BibitemOpen
  \bibfield  {author} {\bibinfo {author} {\bibfnamefont {H.}~\bibnamefont
  {Reuter}}, \bibinfo {author} {\bibfnamefont {F.}~\bibnamefont {Jopp}},
  \bibinfo {author} {\bibfnamefont {J.~M.}\ \bibnamefont {Blanco-Moreno}},
  \bibinfo {author} {\bibfnamefont {C.}~\bibnamefont {Damgaard}}, \bibinfo
  {author} {\bibfnamefont {Y.}~\bibnamefont {Matsinos}}, \ and\ \bibinfo
  {author} {\bibfnamefont {D.~L.}\ \bibnamefont {DeAngelis}},\ }\href@noop {}
  {\bibfield  {journal} {\bibinfo  {journal} {Basic and Applied Ecology}\
  }\textbf {\bibinfo {volume} {11}},\ \bibinfo {pages} {572} (\bibinfo {year}
  {2010})}\BibitemShut {NoStop}%
\bibitem [{\citenamefont {Miller~III}(2008)}]{miller2008hierarchical}%
  \BibitemOpen
  \bibfield  {author} {\bibinfo {author} {\bibfnamefont {W.}~\bibnamefont
  {Miller~III}},\ }\href@noop {} {\bibfield  {journal} {\bibinfo  {journal}
  {Evolution: Education and Outreach}\ }\textbf {\bibinfo {volume} {1}},\
  \bibinfo {pages} {16} (\bibinfo {year} {2008})}\BibitemShut {NoStop}%
\bibitem [{\citenamefont {Maia}\ and\ \citenamefont
  {Guimaraes~Jr}(2024)}]{maia2024hierarchical}%
  \BibitemOpen
  \bibfield  {author} {\bibinfo {author} {\bibfnamefont {K.~P.}\ \bibnamefont
  {Maia}}\ and\ \bibinfo {author} {\bibfnamefont {P.~R.}\ \bibnamefont
  {Guimaraes~Jr}},\ }\href@noop {} {\bibfield  {journal} {\bibinfo  {journal}
  {Ecology Letters}\ }\textbf {\bibinfo {volume} {27}},\ \bibinfo {pages}
  {e14501} (\bibinfo {year} {2024})}\BibitemShut {NoStop}%
\bibitem [{\citenamefont {Yang}\ \emph {et~al.}(2025)\citenamefont {Yang},
  \citenamefont {Hong}, \citenamefont {Kim},\ and\ \citenamefont
  {Park}}]{yang2025understanding}%
  \BibitemOpen
  \bibfield  {author} {\bibinfo {author} {\bibfnamefont {R.~K.}\ \bibnamefont
  {Yang}}, \bibinfo {author} {\bibfnamefont {S.}~\bibnamefont {Hong}}, \bibinfo
  {author} {\bibfnamefont {S.}~\bibnamefont {Kim}}, \ and\ \bibinfo {author}
  {\bibfnamefont {J.}~\bibnamefont {Park}},\ }\href@noop {} {\bibfield
  {journal} {\bibinfo  {journal} {Chaos: An Interdisciplinary Journal of
  Nonlinear Science}\ }\textbf {\bibinfo {volume} {35}} (\bibinfo {year}
  {2025})}\BibitemShut {NoStop}%
\bibitem [{\citenamefont {Szolnoki}\ \emph {et~al.}(2014)\citenamefont
  {Szolnoki}, \citenamefont {Mobilia}, \citenamefont {Jiang}, \citenamefont
  {Szczesny}, \citenamefont {Rucklidge},\ and\ \citenamefont
  {Perc}}]{szolnoki2014cyclic}%
  \BibitemOpen
  \bibfield  {author} {\bibinfo {author} {\bibfnamefont {A.}~\bibnamefont
  {Szolnoki}}, \bibinfo {author} {\bibfnamefont {M.}~\bibnamefont {Mobilia}},
  \bibinfo {author} {\bibfnamefont {L.-L.}\ \bibnamefont {Jiang}}, \bibinfo
  {author} {\bibfnamefont {B.}~\bibnamefont {Szczesny}}, \bibinfo {author}
  {\bibfnamefont {A.~M.}\ \bibnamefont {Rucklidge}}, \ and\ \bibinfo {author}
  {\bibfnamefont {M.}~\bibnamefont {Perc}},\ }\href@noop {} {\bibfield
  {journal} {\bibinfo  {journal} {Journal of the Royal Society Interface}\
  }\textbf {\bibinfo {volume} {11}} (\bibinfo {year} {2014})}\BibitemShut
  {NoStop}%
\bibitem [{\citenamefont {Nag~Chowdhury}\ \emph
  {et~al.}(2021{\natexlab{a}})\citenamefont {Nag~Chowdhury}, \citenamefont
  {Kundu}, \citenamefont {Perc},\ and\ \citenamefont
  {Ghosh}}]{chowdhury2021complex}%
  \BibitemOpen
  \bibfield  {author} {\bibinfo {author} {\bibfnamefont {S.}~\bibnamefont
  {Nag~Chowdhury}}, \bibinfo {author} {\bibfnamefont {S.}~\bibnamefont
  {Kundu}}, \bibinfo {author} {\bibfnamefont {M.}~\bibnamefont {Perc}}, \ and\
  \bibinfo {author} {\bibfnamefont {D.}~\bibnamefont {Ghosh}},\ }\href@noop {}
  {\bibfield  {journal} {\bibinfo  {journal} {Proceedings of the Royal Society
  A: Mathematical, Physical and Engineering Sciences}\ }\textbf {\bibinfo
  {volume} {477}} (\bibinfo {year} {2021}{\natexlab{a}})}\BibitemShut {NoStop}%
\bibitem [{\citenamefont {Nag~Chowdhury}\ \emph {et~al.}(2023)\citenamefont
  {Nag~Chowdhury}, \citenamefont {Banerjee}, \citenamefont {Perc},\ and\
  \citenamefont {Ghosh}}]{chowdhury2023eco}%
  \BibitemOpen
  \bibfield  {author} {\bibinfo {author} {\bibfnamefont {S.}~\bibnamefont
  {Nag~Chowdhury}}, \bibinfo {author} {\bibfnamefont {J.}~\bibnamefont
  {Banerjee}}, \bibinfo {author} {\bibfnamefont {M.}~\bibnamefont {Perc}}, \
  and\ \bibinfo {author} {\bibfnamefont {D.}~\bibnamefont {Ghosh}},\
  }\href@noop {} {\bibfield  {journal} {\bibinfo  {journal} {Journal of
  Theoretical Biology}\ }\textbf {\bibinfo {volume} {564}},\ \bibinfo {pages}
  {111446} (\bibinfo {year} {2023})}\BibitemShut {NoStop}%
\bibitem [{\citenamefont {Nag~Chowdhury}\ \emph
  {et~al.}(2021{\natexlab{b}})\citenamefont {Nag~Chowdhury}, \citenamefont
  {Kundu}, \citenamefont {Banerjee}, \citenamefont {Perc},\ and\ \citenamefont
  {Ghosh}}]{chowdhury2021eco}%
  \BibitemOpen
  \bibfield  {author} {\bibinfo {author} {\bibfnamefont {S.}~\bibnamefont
  {Nag~Chowdhury}}, \bibinfo {author} {\bibfnamefont {S.}~\bibnamefont
  {Kundu}}, \bibinfo {author} {\bibfnamefont {J.}~\bibnamefont {Banerjee}},
  \bibinfo {author} {\bibfnamefont {M.}~\bibnamefont {Perc}}, \ and\ \bibinfo
  {author} {\bibfnamefont {D.}~\bibnamefont {Ghosh}},\ }\href@noop {}
  {\bibfield  {journal} {\bibinfo  {journal} {Journal of Theoretical Biology}\
  }\textbf {\bibinfo {volume} {518}},\ \bibinfo {pages} {110606} (\bibinfo
  {year} {2021}{\natexlab{b}})}\BibitemShut {NoStop}%
\bibitem [{\citenamefont {Roy}\ \emph {et~al.}(2023)\citenamefont {Roy},
  \citenamefont {Nag~Chowdhury}, \citenamefont {Kundu}, \citenamefont {Sar},
  \citenamefont {Banerjee}, \citenamefont {Rakshit}, \citenamefont {Mali},
  \citenamefont {Perc},\ and\ \citenamefont {Ghosh}}]{roy2023time}%
  \BibitemOpen
  \bibfield  {author} {\bibinfo {author} {\bibfnamefont {S.}~\bibnamefont
  {Roy}}, \bibinfo {author} {\bibfnamefont {S.}~\bibnamefont {Nag~Chowdhury}},
  \bibinfo {author} {\bibfnamefont {S.}~\bibnamefont {Kundu}}, \bibinfo
  {author} {\bibfnamefont {G.~K.}\ \bibnamefont {Sar}}, \bibinfo {author}
  {\bibfnamefont {J.}~\bibnamefont {Banerjee}}, \bibinfo {author}
  {\bibfnamefont {B.}~\bibnamefont {Rakshit}}, \bibinfo {author} {\bibfnamefont
  {P.~C.}\ \bibnamefont {Mali}}, \bibinfo {author} {\bibfnamefont
  {M.}~\bibnamefont {Perc}}, \ and\ \bibinfo {author} {\bibfnamefont
  {D.}~\bibnamefont {Ghosh}},\ }\href@noop {} {\bibfield  {journal} {\bibinfo
  {journal} {Scientific Reports}\ }\textbf {\bibinfo {volume} {13}},\ \bibinfo
  {pages} {14331} (\bibinfo {year} {2023})}\BibitemShut {NoStop}%
\bibitem [{\citenamefont {Park}\ \emph {et~al.}(2013)\citenamefont {Park},
  \citenamefont {Do}, \citenamefont {Huang},\ and\ \citenamefont
  {Lai}}]{park2013persistent}%
  \BibitemOpen
  \bibfield  {author} {\bibinfo {author} {\bibfnamefont {J.}~\bibnamefont
  {Park}}, \bibinfo {author} {\bibfnamefont {Y.}~\bibnamefont {Do}}, \bibinfo
  {author} {\bibfnamefont {Z.-G.}\ \bibnamefont {Huang}}, \ and\ \bibinfo
  {author} {\bibfnamefont {Y.-C.}\ \bibnamefont {Lai}},\ }\href@noop {}
  {\bibfield  {journal} {\bibinfo  {journal} {Chaos: An Interdisciplinary
  Journal of Nonlinear Science}\ }\textbf {\bibinfo {volume} {23}} (\bibinfo
  {year} {2013})}\BibitemShut {NoStop}%
\bibitem [{\citenamefont {Park}\ \emph {et~al.}(2018)\citenamefont {Park},
  \citenamefont {Do},\ and\ \citenamefont {Jang}}]{park2018multistability}%
  \BibitemOpen
  \bibfield  {author} {\bibinfo {author} {\bibfnamefont {J.}~\bibnamefont
  {Park}}, \bibinfo {author} {\bibfnamefont {Y.}~\bibnamefont {Do}}, \ and\
  \bibinfo {author} {\bibfnamefont {B.}~\bibnamefont {Jang}},\ }\href@noop {}
  {\bibfield  {journal} {\bibinfo  {journal} {Chaos: An Interdisciplinary
  Journal of Nonlinear Science}\ }\textbf {\bibinfo {volume} {28}} (\bibinfo
  {year} {2018})}\BibitemShut {NoStop}%
\bibitem [{\citenamefont {Miller}\ and\ \citenamefont
  {Max}(2025)}]{miller2025multispecies}%
  \BibitemOpen
  \bibfield  {author} {\bibinfo {author} {\bibfnamefont {Z.~R.}\ \bibnamefont
  {Miller}}\ and\ \bibinfo {author} {\bibfnamefont {D.}~\bibnamefont {Max}},\
  }\href@noop {} {\bibfield  {journal} {\bibinfo  {journal} {Ecology Letters}\
  }\textbf {\bibinfo {volume} {28}},\ \bibinfo {pages} {e70206} (\bibinfo
  {year} {2025})}\BibitemShut {NoStop}%
\bibitem [{\citenamefont {Hastings}(1980)}]{hastings1980disturbance}%
  \BibitemOpen
  \bibfield  {author} {\bibinfo {author} {\bibfnamefont {A.}~\bibnamefont
  {Hastings}},\ }\href@noop {} {\bibfield  {journal} {\bibinfo  {journal}
  {Theoretical Population Biology}\ }\textbf {\bibinfo {volume} {18}},\
  \bibinfo {pages} {363} (\bibinfo {year} {1980})}\BibitemShut {NoStop}%
\end{thebibliography}%
\bibliographystyle{apsrev4-1}

\end{document}